\DeclareMathOperator{\takum}{\tau}
\DeclareMathOperator{\integer}{uint}
\newcommand{\euler}{\sqrt{\mathrm{e}}}
\pgfplotsset{compat=1.17}
\tikzset{
    port/.style={
        circle,
        fill,
        inner sep=1.5pt,
        minimum size=0pt
    }
}
\definecolor{sign}{HTML}{b02a2d}
\definecolor{direction}{HTML}{007900}
\definecolor{regime}{HTML}{8c399e}
\definecolor{characteristic}{HTML}{1f5dc2}
\definecolor{mantissa}{HTML}{636363}
\definecolor{error}{HTML}{BD002A}
\definecolor{cellbg}{HTML}{EDEDED}
\def\lst@makecaption{%
  \def\@captype{table}%
  \@makecaption
}
\begin{document}

%%% metadata %%%
\title{Design and Implementation of a Takum Arithmetic Hardware Codec}
\titlerunning{Design and Implementation of a Takum Arithmetic Hardware Codec}
\author{Laslo Hunhold\,\orcidlink{0000-0001-8059-0298}}
\authorrunning{L. Hunhold}
\institute{%
	Parallel and Distributed Systems Group\\
	University of Cologne, Cologne, Germany\\
	\email{hunhold@uni-koeln.de}
}
\maketitle

\begin{abstract}
The takum machine number format has been recently proposed as an enhancement over the posit number format, which is considered a promising alternative to the IEEE 754 floating-point standard. Takums retain the useful posit properties, but feature a novel
exponent coding scheme that yields more precision for small and large magnitude numbers and a much higher and bounded dynamic range.
\par
This paper presents the design and implementation of a hardware codec for both takums (logarithmic number system, LNS) and linear takums (floating-point format). The codec design is emphasised, as it constitutes the primary distinguishing feature compared to logarithmic posits (LNS) and posits (floating-point format), which otherwise share similar internal representations. Furthermore, a novel internal representation for LNS is proposed. The presented takum codec, implemented in VHDL, demonstrates near-optimal scalability and performance on an FPGA. It achieves latency reductions of up to \SI{38}{\percent} and reduces LUT utilisation up to \SI{50}{\percent} compared to the best state-of-the-art posit codecs.
\end{abstract}

\keywords{
	machine numbers \and
	takum arithmetic \and
	posit arithmetic \and
	logarithmic number system \and
	floating-point arithmetic \and
	HDL \and
	VHDL \and
	FPGA
}

\section{Introduction}
For decades, IEEE 754 floating-point numbers have been the dominant format for representing real numbers in computing. However, recent demands for low- and mixed-precision arithmetic in fields such as machine learning have highlighted significant shortcomings of the IEEE 754 standard, particularly in terms of reproducibility, precision and dynamic range. While modifications within the IEEE 754 framework, such as Google's \texttt{bfloat16} \cite{bfloat16}, have been proposed to address these issues, more fundamental changes to computer arithmetic are also being explored.
One such innovation is posit arithmetic \cite{posits-beating_floating-point-2017, posits-standard-2022}, which has garnered significant attention in recent years. Posits have \textit{tapered precision} by encoding the exponent with variable length, allowing for more fraction bits around numbers near 1, at the expense of fewer fraction bits for numbers of very small or large magnitude. This encoding also allows treating posits like two's complement integers in terms of sign, ordering and negation.
\par
The posit format is generally considered simpler to implement in hardware 
compared to IEEE 754 floating-point numbers, largely due to the reduction of 
special cases. Over the past few years, several optimised hardware 
implementations have been developed. Initial work is documented in 
\cite{posit-hardware-2018, posit-hardware_cost-2019}, with further 
optimisations leading to the creation of the FloPoCo posit core generator 
\cite{flopoco-1-2020, flopoco-2-2021, flopoco-3-2021, flopoco-4-2022}. This 
culminated in the development of a complete RISC-V posit core supporting up to 
64-bit precision \cite{percival-2022, percival-64_bit-2024}.
Despite the potential of posits in various low- and mixed-precision applications \cite{posit-dnn-2019, 2024-log-posit}, several challenges have hindered their adoption. Chief among these is the sharp decline in precision for small and large magnitude numbers and the lack of dynamic range, which poses significant problems for general-purpose arithmetic applications
and numerical methods \cite{posits-good-bad-ugly-2019, 2024-takum}.
\par
In response to these challenges, the takum format was recently proposed \cite{2024-takum}. Takum introduces a new encoding scheme that guarantees a minimum of $n-12$ fraction bits and achieves a significantly increased dynamic range that is fully realised at 12-bit precision, remaining constant thereafter. This design not only makes takums suitable for applying well-established numerical analysis techniques, but it also decouples the choice of precision from dynamic range concerns in mixed-precision applications. This property is reflected in its name, derived from the
Icelandic \enquote{\emph{tak}markað \emph{um}fang}, meaning \enquote{limited range}.
\par
While the takum format has been formally verified \cite{2024-takum}, its hardware implementation has only been briefly touched upon in \cite[Section~5.2]{2024-takum}, with claims that it should be simpler than posit implementations. The primary reason is that the takum exponent is at most 11 bits long, as opposed to potentially occupying the entire width, as is the case with posits. Consequently, the bit string length in takum should have minimal impact on the complexity of encoding and decoding processes, unlike in posit arithmetic. However, the upfront computational costs of takum encoding and decoding remain an open question.
\par
This paper makes three primary contributions: (1) it derives efficient algorithms for encoding and decoding takum numbers, proposing a new logarithmic number system (LNS) internal representation; (2) it provides a detailed open-source VHDL implementation optimised for but not limited to FPGA hardware; and (3) it compares the performance of the takum codec with state-of-the-art posit codecs.
\par
The remainder of this paper is organised as follows: Section~\ref{sec:takum_definition} defines the takum format in both its logarithmic and linear forms. Section~\ref{sec:internal_representations} introduces a novel internal representation for logarithmic number systems, inspired by recent advances in posit codecs. Sections~\ref{sec:decoder} and \ref{sec:encoder} then detail the design and implementation of a takum decoder and encoder, respectively. Section~\ref{sec:evaluation} presents an evaluation of the VHDL implementation on FPGA, followed by a conclusion and outlook in Section~\ref{sec:conclusion_and_outlook}.
\section{Takum Encoding Scheme}\label{sec:takum_definition}
We begin by defining the (logarithmic) takum encoding scheme, 
originally introduced in \cite{2024-takum}, as its binary 
format and notation will be extensively referenced throughout 
this work. In terms of notation, for a bit string $B \in 
\{0,1\}$, $\overline{B}$ denotes its complement.
\begin{definition}[takum encoding {\cite[Definition~2]{2024-takum}}]\label{def:takum}
	Let $n \in \mathbb{N}$ with $n \ge 12$. Any $n$-bit MSB$\rightarrow$LSB 
	string 
	$T := (\textcolor{sign}{S},\textcolor{direction}{D},\textcolor{regime}{R},
	\textcolor{characteristic}{C},\textcolor{mantissa}{M}) \in {\{0,1\}}^n$ of 
	the form
	\begin{center}
		\begin{tikzpicture}
			\draw[<->] (0.0, 0.7) -- (0.4, 0.7) node[above,pos=.5] {sign};
			\draw[<->] (0.4, 0.7) -- (4.5, 0.7) node[above,pos=.5] 
			{characteristic};
			\draw[<->] (4.5, 0.7) -- (8.0, 0.7) node[above,pos=.5] {mantissa};
			
			\draw (0,  0  ) rectangle (0.4,0.5) node[pos=.5] 
			{\textcolor{sign}{S}};
			\draw (0.4,0  ) rectangle (0.8,0.5) node[pos=.5] 
			{\textcolor{direction}{D}};
			\draw (0.8,0  ) rectangle (2.0,0.5) node[pos=.5] 
			{\textcolor{regime}{R}};
			\draw (2.0,0  ) rectangle (4.5,0.5) node[pos=.5] 
			{\textcolor{characteristic}{C}};
			\draw (4.5,0  ) rectangle (8.0,0.5) node[pos=.5] 
			{\textcolor{mantissa}{M}};
			
			\draw[<->] (0.0, -0.2) -- (0.4, -0.2) node[below,pos=.5] {$1$};
			\draw[<->] (0.4, -0.2) -- (0.8, -0.2) node[below,pos=.5] {$1$};
			\draw[<->] (0.8, -0.2) -- (2.0, -0.2) node[below,pos=.5] {$3$};
			\draw[<->] (2.0, -0.2) -- (4.5, -0.2) node[below,pos=.5] {$r$};
			\draw[<->] (4.5, -0.2) -- (8.0, -0.2) node[below,pos=.5] {$p$};
		\end{tikzpicture}
	\end{center}
	with {sign bit} $\textcolor{sign}{S}$, {direction bit}
	$\textcolor{direction}{D}$, {regime bits}
	$\textcolor{regime}{R} := (\textcolor{regime}{R}_2,
	\textcolor{regime}{R}_1,\textcolor{regime}{R}_0)$, characteristic 
	bits
	$\textcolor{characteristic}{C} :=(\textcolor{characteristic}{C}_{r-1},\dots,
	\textcolor{characteristic}{C}_0)$, {mantissa bits}
	$\textcolor{mantissa}{M} := (\textcolor{mantissa}{M}_{p-1},\dots,
	\textcolor{mantissa}{M}_0)$, {regime}
	\begin{equation}
		r := \begin{cases}
			\integer(\overline{\textcolor{regime}{R}}) & 
			\textcolor{direction}{D} = 0\\
			\integer(\textcolor{regime}{R})	&
			\textcolor{direction}{D} = 1
		\end{cases}
		\in \{0,\dots,7\},
	\end{equation}
	{characteristic}
	\begin{equation}\label{eq:def-characteristic}
		c :=
		\begin{cases}
			-2^{r+1} + 1 + \integer(\textcolor{characteristic}{C}) & 
			\textcolor{direction}{D} = 
			0\\
			2^r - 1 + \integer(\textcolor{characteristic}{C})
			& \textcolor{direction}{D} = 1
		\end{cases}
		\in \{ -255,\dots,254 \},
	\end{equation}
	{mantissa bit count} $p := n - r -5 \in \{n-12,\dots,n-5\}$,
	{mantissa} $m := 2^{-p} \integer(\textcolor{mantissa}{M})
	\in [0,1)$ and {logarithmic value}
	\begin{equation}
		\ell := {(-1)}^{\textcolor{sign}{S}}(c + m) \in (-255,255)
	\end{equation}
	encodes the takum value
	\begin{equation}\label{eq:takum-value}
		\takum(T)
		:= \begin{cases}
			\begin{cases}
				0 & \textcolor{sign}{S} = 0\\
				\mathrm{NaR} & \textcolor{sign}{S} = 1
			\end{cases}
			& \textcolor{direction}{D} = \textcolor{regime}{R} = 
			\textcolor{characteristic}{C} = \textcolor{mantissa}{M} 
			= \bm{0} \\
			{(-1)}^{\textcolor{sign}{S}}
			\euler^{\ell} & \text{otherwise}
		\end{cases}
	\end{equation}
	with $\takum \colon {\{0,1\}}^n \mapsto \{ 0,\mathrm{NaR} \} \cup
	\pm\left(\euler^{-255},\euler^{255}\right)$ and \textsc{Euler}'s number
	$\mathrm{e} \approx 2.718$ (and thus
	$\euler \approx 1.649$).
	Any bit string shorter than 12 bits is also
	considered in the definition by assuming the missing bits to be
	zero bits (\enquote{ghost bits}).
\end{definition}
Takums are uniformly defined for arbitrary bit-string lengths 
$n$, unlike IEEE 754 floating-point numbers, which are only non-uniformly defined for
specific values of $n$. Although takums are primarily defined as an LNS, linear takums, their floating-point variant, retain a structure that closely mirrors the original LNS format.
\begin{definition}[linear takum encoding {\cite[Definition~20]{2024-takum}}]
	\label{def:linear_takum}
	Take Definition~\ref{def:takum} and rename
	the mantissa bit count $p$, mantissa bits
	$\textcolor{mantissa}{M}$ and mantissa $m$ to
	fraction bit count $p$,
	fraction bits $\textcolor{mantissa}{F}$ and
	fraction $f$. Define the exponent
	\begin{equation}
		e := {(-1)}^{\textcolor{sign}{S}} (c + \textcolor{sign}{S})
		\in \{ -255,\dots,254 \}
	\end{equation}
	and the linear takum value encoding as
	\begin{equation}
		\overline{\takum}(T)
		:= \begin{cases}
			\begin{cases}
				0 & \textcolor{sign}{S} = 0\\
				\mathrm{NaR} & \textcolor{sign}{S} = 1
			\end{cases}
			& \textcolor{direction}{D} = \textcolor{regime}{R} = 
			\textcolor{characteristic}{C} = \textcolor{mantissa}{F} 
			= \bm{0} \\
			[(1 - 3 \textcolor{sign}{S}) + f] \cdot 2^{e} & \text{otherwise.}
		\end{cases}
	\end{equation}
\end{definition}
As we can see, both the logarithmic and linear definitions of takum are largely similar. In this paper, we will adhere closely to both definitions for the development of the codec. Please refer to \cite{2024-takum} for an overview of the numerical properties.
\section{Internal Representations}\label{sec:internal_representations}
While it is clear that the decoder's input, as well as the 
encoder's output, is a binary representation of a takum, 
defining the decoder's output and encoder's input in general 
cases is more complex. This internal representation must not 
only uniquely identify a takum but also facilitate efficient 
arithmetic operations. Typically, decoders and encoders are 
used as input and output stages within an arithmetic unit, so 
the internal representation should be optimised for these 
contexts. In general, special cases (zero and $\mathrm{NaR}$) 
are represented separately; thus, the following discussion of 
internal representations pertains only to non-zero real numbers.
\par
In the linear case, the most straightforward approach would be to use the standard floating-point internal representation of the form
\begin{equation}\label{eq:representation-floating_point}
	\left(\textcolor{sign}{S},\hat{e},\hat{f}\right) 
	\mapsto (-1)^{\textcolor{sign}{S}} \left(1+\hat{f}\right) \cdot 
	2^{\hat{e}}
\end{equation}
with $\hat{f} \in [0,1)$ and $\hat{e} \in \{-254,\dots,254\}$.
However, as demonstrated by Yonemoto \cite{posits-beating_floating-point-2017} for the default internal representation of posits and further generalised by \cite{flopoco-4-2022}, it is more effective to adopt the internal representation
\begin{equation}\label{eq:representation-two_s_complement}
	(\textcolor{sign}{S},e,f) \mapsto [(1 - 3 \textcolor{sign}{S}) + f] \cdot 2^e
\end{equation}
with $f \in [0,1)$ and $e \in \{-255,\dots,254\}$.
This choice is advantageous because the internal representation is monotonic in $f$, similar to how the fraction bits
$\textcolor{mantissa}{F}$ are monotonic in takum encoding. This 
monotonicity avoids the need for a full two's complement 
negation of the fraction during decoding and encoding when 
$\textcolor{sign}{S}$ is 1, as would be necessary with the 
representation in (\ref{eq:representation-floating_point}). The 
trade-off between (\ref{eq:representation-floating_point}) and 
(\ref{eq:representation-two_s_complement}) involves a slightly 
more complex arithmetic for the latter and a departure from the 
extensive body of work associated with the former, as discussed 
in \cite{flopoco-3-2021, flopoco-4-2022}. Moreover, examining 
the exponent definition, $e = (-1)^{\textcolor{sign}{S}} (c + 
\textcolor{sign}{S})$, reveals that it simplifies to $c$ when 
$\textcolor{sign}{S} = 0$, and to the bitwise complement of 
$c$, assuming $c$ is represented as a two's complement integer, 
when $\textcolor{sign}{S} = 1$.
\par
Turning to the logarithmic case, the canonical internal representation is given by
\begin{equation}\label{eq:representation-logarithmic}
	(\textcolor{sign}{S},\ell) \mapsto {(-1)}^{\textcolor{sign}{S}} \euler^{\ell}
\end{equation}
where $\ell \in (-255, 255)$. This representation is also utilised in the most recent work on logarithmic posits \cite{2024-log-posit}. However, this approach has a drawback similar to the linear case, as it necessitates a two's complement negation both after decoding to and before encoding from the internal representation.
By definition, $\ell = (-1)^{\textcolor{sign}{S}} (c + m)$. An alternative we propose here is to define a \enquote{barred logarithmic value} as $\overline{\ell} := c + m = (-1)^{\textcolor{sign}{S}} \ell \in (-255, 255)$ and use the internal representation
\begin{equation}\label{eq:representation-barred_logarithmic}
	\left(\textcolor{sign}{S},\overline{\ell}\right) \mapsto {(-1)}^{\textcolor{sign}{S}} \euler^{{(-1)}^{\textcolor{sign}{S}} \overline{\ell}}.
\end{equation}
The advantage of this novel representation is that $\overline{\ell}$ is monotonic in $m$, just as the mantissa bits $\textcolor{mantissa}{M}$ are monotonic in takum encoding. This monotonicity eliminates the need for two's complement negations after decoding and before encoding. The impact on arithmetic complexity is minimal, as all sign cases of $\ell$ must be handled regardless. For instance, when computing the square root, which involves a right shift of $\ell$, the procedure remains unchanged. Similarly, multiplication and division require handling all sign combinations of $\ell$, as does addition and subtraction.
\par
The barred logarithmic value $\overline{\ell}$ can be easily derived from $c$ and $m$ by concatenating the 9-bit signed integer $c$ with the $(n-5)$-bit unsigned integer $2^{n-5} \cdot m$ (given $p + r = n - 5$), which is then interpreted as an $(n+4)$-bit fixed-point number.
In general, whether in the linear or logarithmic case, the internal representations $(\textcolor{sign}{S}, e, f)$ and $(\textcolor{sign}{S}, \overline{\ell})$ are straightforward to determine from the characteristic $c$ and the equivalent
fraction $f$ and mantissa $m$.
\section{Decoder}\label{sec:decoder}
The purpose of the decoder is to transform a given $n$-bit 
takum binary representation, where $n \in \mathbb{N}_2$
(the set of natural numbers greater than or equal to 2), into 
its corresponding internal representation. Specifically, this 
internal representation is expressed as $(\textcolor{sign}{S}, 
e, f)$ in the linear case and as $(\textcolor{sign}{S}, 
\overline{\ell})$ in the logarithmic case (see 
(\ref{eq:representation-two_s_complement}) and 
(\ref{eq:representation-barred_logarithmic})). As demonstrated 
in Section~\ref{sec:internal_representations}, both forms of 
internal representation can be directly derived from the 
characteristic $c$ and the mantissa $m$ (where the fraction $f$ 
corresponds to $m$). Consequently, our primary objective is to 
determine $c$ and $m$, as they serve as the common foundation 
for both cases to derive the respective internal 
representation. Additionally, if the input corresponds to $0$ 
or $\mathrm{NaR}$, this should also be appropriately flagged.
\par
Given that takums are a tapered precision machine number format, it is also essential to determine the number of mantissa bits (precision) for a given input. The direction bit $\textcolor{direction}{D}$ ($\mathit{direction\_bit}$) and the sign bit $\textcolor{sign}{S}$ ($\mathit{sign}$) can be straightforwardly extracted from the two most significant bits (MSBs) of the input.
\subsection{Characteristic/Exponent Determinator}
To determine the characteristic (and later, the exponent), we focus on the following problem reduction:
Referring to the takum bit pattern defined in Definition~\ref{def:takum} and considering that both the characteristic length and the bit pattern are variable, it is most effective to expand the bit pattern to 12 bits and then extract the 7 bits that follow the regime. This extraction will be carried out later; our current focus is on the processing of these bits.
These 7 bits, which we refer to as the raw characteristic 
($\mathit{characteristic\_raw\_bits}$), are guaranteed to include all 
$\textit{regime}$ characteristic bits, followed by $7 - \mathit{regime} =: 
\textit{antiregime}$ waste bits.
\par
If we examine the definition of the characteristic provided in (\ref{eq:def-characteristic}), we observe that it involves introducing a bias to the characteristic bits. This process entails a bitwise operation, followed by either an increment or a decrement of the bits. To simplify the procedure and avoid the need for decrements, we propose a normalisation approach that exclusively relies on increments. To justify this approach, we first present the following
\begin{proposition}[characteristic 
complement]\label{prop:characteristic_complement}
	Let $n \in \mathbb{N}_1$ and bit string
	$T := (\textcolor{sign}{S},\textcolor{direction}{D},\textcolor{regime}{R},
	\textcolor{characteristic}{C},\textcolor{mantissa}{M})
	\in {\{0,1\}}^n$ as in Definition~\ref{def:takum} with
	$\takum((\textcolor{sign}{S},\textcolor{direction}{D}, \textcolor{regime}{R},
	\textcolor{characteristic}{C},\textcolor{mantissa}{M})) \notin \{0,\mathrm{NaR} \}$,
	characteristic $c$ and regime $r$. Let
	$\tilde{T} := (\textcolor{sign}{\tilde{S}},\overline{\textcolor{direction}{D}},\overline{\textcolor{regime}{R}},
	\overline{\textcolor{characteristic}{C}},\textcolor{mantissa}{\tilde{M}}) \in {\{0,1\}}^n$
	with characteristic $\tilde{c}$, regime $\tilde{r}$
	and arbitrary $\textcolor{sign}{\tilde{S}}$ and $\textcolor{mantissa}{\tilde{M}}$.
	It holds $\tilde{r} = r$ and $\tilde{c} = -c - 1$.
	This means that, assuming $c$ is represented as a two's 
	complement integer, $\tilde{c}$ is obtained directly as 
	the bitwise complement of $c$.
\end{proposition}
\begin{proof}
	It follows by definition that $\tilde{r} = r$ given the 
	complementation of the direction bit cancels out that 
	of the regime bits, yielding the same regime value. It 
	holds for the characteristic value
	\begin{align}
		\tilde{c} &= \sum_{i=0}^{r-1} \overline{\textcolor{characteristic}{C}}_i 2^i +
			\begin{cases}
				-2^{r+1} + 1 & \overline{\textcolor{direction}{D}} = 0\\
				2^{r} - 1 & \overline{\textcolor{direction}{D}} = 1\\
			\end{cases}\\
		&= (2^r - 1) - \sum_{i=0}^{r-1} \textcolor{characteristic}{C}_i 2^i +
			\begin{cases}
				-2^{r+1} + 1 & \overline{\textcolor{direction}{D}} = 0\\
				2^{r} - 1 & \overline{\textcolor{direction}{D}} = 1\\
			\end{cases}\\
		&= -\left(
				 \sum_{i=0}^{r-1} \textcolor{characteristic}{C}_i 2^i +
				\begin{cases}
					2^{r} - 1 & \overline{\textcolor{direction}{D}} = 0\\
					-2^{r+1} + 1 & \overline{\textcolor{direction}{D}} = 1\\
				\end{cases}
			\right) - 1\\
		&= -c - 1.
	\end{align}
	Thus we have proven what was to be shown.
	\qed
\end{proof}
As observed, negating the direction bit, the regime bits, and the characteristic bits corresponds to negating the characteristic value, which is represented in two's complement. This property can be leveraged in the following manner:
\begin{corollary}[conditional characteristic complement]
	\label{cor:conditional_characteristic_complement}
	Let $n \in \mathbb{N}_1$ and
	$T := (\textcolor{sign}{S},\textcolor{direction}{D},\textcolor{regime}{R},
	\textcolor{characteristic}{C},\textcolor{mantissa}{M})
	\in {\{0,1\}}^n$ as in Definition~\ref{def:takum} with
	$\takum((\textcolor{sign}{S},\textcolor{direction}{D}, \textcolor{regime}{R},
	\textcolor{characteristic}{C},\textcolor{mantissa}{M})) \notin \{0,\mathrm{NaR} \}$,
	characteristic $c$ and regime $r$. Let
	\begin{equation}
		\tilde{T} := \begin{cases}
			(\textcolor{sign}{S},\textcolor{direction}{D},\textcolor{regime}{R},
				\textcolor{characteristic}{C},\textcolor{mantissa}{M}) & \textcolor{direction}{D} = 0\\
			(\textcolor{sign}{S},\overline{\textcolor{direction}{D}},\overline{\textcolor{regime}{R}},
								\overline{\textcolor{characteristic}{C}},\textcolor{mantissa}{M}) & \textcolor{direction}{D} = 1
		\end{cases}
	\end{equation}
	with characteristic $\tilde{c}$ and regime $\tilde{r}$. It holds $\tilde{r} = r$ and
	\begin{equation}
		\tilde{c} =
		\begin{cases}
			c \!& \textcolor{direction}{D}\!=\!0\\
			-c-1 \!& \textcolor{direction}{D}\!=\!1
		\end{cases} =
		\begin{cases}
			-2^{r+1} \!+ \!1 \!+ \!\sum_{i=0}^{r-1} \textcolor{characteristic}{C}_i 2^i \!& \textcolor{direction}{D}\!=\!0\\
			-2^{r+1} \!+ \!1 \!+ \!\sum_{i=0}^{r-1} \overline{\textcolor{characteristic}{C}}_i 2^i \!& \textcolor{direction}{D}\!=\!1.
		\end{cases}
	\end{equation}
	It should be noted that $c$ can be obtained from $-c-1$ 
	by a simple bitwise complement, assuming $c$ is 
	represented as a two's complement integer.
\end{corollary}
\begin{proof}
	This follows directly from 
	Proposition~\ref{prop:characteristic_complement} and
	Definition~\ref{def:takum}.
	\qed
\end{proof}
This approach suggests a strategy for determining the characteristic $c$: First, conditionally negate the characteristic bits when the direction bit is 1. Next, apply the bias $-2^{r+1}$ unconditionally, then increment the result. Finally, conditionally negate the intermediate result $\tilde{c}$ when the direction bit is 1 to obtain the final characteristic $c$. This ensures that the process involves only incrementing in all cases (without any decrement), allowing us to focus solely on the single application of the bias $-2^{r+1}$.
\par
Fortunately, the application of the bias is straightforward: As 
illustrated in Table~\ref{tab:encoder-biases}, each of the 8 
possible biases can be added to the $r$ characteristic bits, 
assuming they are represented as a two's complement integer, 
using a simple bitwise OR operation. Additionally, since the 
$(r+1)$th bit is always zero in the biases, the bias can also 
be applied using a bitwise OR operation on the incremented 
characteristic bits.
\begin{table}[tbp]
	\caption{
		All possible biases $-2^{r+1}$ given as 9-bit two's complement integers 
		for
		all regimes $r$ ranging from $0$ to $7$. The $r$ least significant bits
		are underlined.
	}
	\label{tab:encoder-biases}
	\centering
	\bgroup
	\def\arraystretch{1.2}
	\setlength{\tabcolsep}{0.5em}
	\begin{tabular}{| c | c |}
		\hline
		$r$ & $-2^{r+1}$\\\hline\hline
		0 & \texttt{111111110} \\\hline
		1 & \texttt{11111110\underline{0}} \\\hline
		2 & \texttt{1111110\underline{00}} \\\hline
		3 & \texttt{111110\underline{000}} \\\hline	
	\end{tabular}
	\hspace{0.4cm}
	\begin{tabular}{| c | c |}
		\hline
		$r$ & $-2^{r+1}$\\\hline\hline
		4 & \texttt{11110\underline{0000}} \\\hline
		5 & \texttt{1110\underline{00000}} \\\hline
		6 & \texttt{110\underline{000000}} \\\hline
		7 & \texttt{10\underline{0000000}} \\\hline		
	\end{tabular}
	\egroup
\end{table}
\par
However, we adopt a different approach: Given that our raw characteristic contains the left-aligned characteristic bits, we first invert these bits if the direction bit is $1$. Next, we prepend the bits \texttt{10} to the left and perform an \emph{arithmetic} right shift by the \textit{antiregime} bits. This process yields the desired bitwise OR of the bias with the characteristic bits.
Subsequently, we increment the first 8 bits of the resulting 
value, prepend \texttt{1} to the left, and then conditionally 
negate the outcome based on the direction bit, as illustrated 
in Corollary~\ref{cor:conditional_characteristic_complement}.
\par
Referring back to the previously discussed internal representations in 
Section~\ref{sec:internal_representations}, it was noted that in the case of 
linear takums, the exponent is obtained by negating the characteristic. Rather 
than performing this negation separately, which would introduce additional 
overhead, we introduce an additional input bit, $\mathit{output\_exponent}$. 
This parameter, which is set during synthesis and not a true input bit of the 
entity, inverts the conditional negation we already perform, thus providing the 
exponent negation at no extra cost.
\subsection{Pre-, Logarithmic and Linear Decoder}
\begin{figure}[tbp]
	\begin{center}
		\begin{circuitikz}[scale=0.8, transform shape]
			% Entities
			\node
			    [draw, rectangle, densely dotted, thick, minimum width=2cm, minimum height=3cm,
			    label={[anchor=south east]south east:{E1}}]
			    (E1) at (-1.1,0.5) {};

			\node
			    [draw, rectangle, densely dotted, thick, minimum width=3.9cm, minimum height=4.6cm,
			    label={[anchor=south east]south east:{E2}}]
			    (E2) at (3.65,0.5) {};

			\node
			    [draw, rectangle, densely dotted, thick, minimum width=4cm, minimum height=2.3cm,
			    label={[anchor=south east]south east:{E3}}]
			    (E3) at (-3.0,-5.25) {};

			% Input ports
			\node[port, label={[anchor=south]above:{$\mathit{takum}$}}] 
			    (in1) at (-6,-1) {};
			\node(helper1) at ([xshift=1cm,yshift=-1.6cm]in1) {};
			\node(helper2) at ([xshift=9cm]E1.center) {};

			% Entity 1 ports
			\node[port] 
			    (E1in1) at ([yshift=-0.8cm]E1.north west) {};
			\node[port] 
			    (E1in2) at ([yshift=-1.6cm]E1.north west) {};
			\node[port, label={[anchor=south west]above right:{$\mathit{antiregime}$}}] 
			    (E1out1) at ([yshift=-0.8cm]E1.north east) {};
			\node[port, label={[anchor=south west]above right:{$\mathit{regime}$}}] 
			    (E1out2) at ([yshift=-1.6cm]E1.north east) {};

			% Entity 2 ports
			\node[port] 
				(E2in0) at ([yshift=-0.8cm]E2.north west) {};
			\node[port] 
				(E2in1) at ([yshift=-1.6cm]E2.north west) {};
			\node[port] 
				(E2in2) at ([yshift=-2.4cm]E2.north west) {};
			\node[port, label={[anchor=south west,align=left]above right:{$\mathit{characteristic\_}$\\$\mathit{or\_exponent}$}}] 
			    (E2out1) at ([yshift=-0.8cm]E2.north east) {};

			% Entity 3 ports
			\node[port] 
				(E3in1) at ([yshift=-0.8cm]E3.north west) {};
			\node[port, label={[anchor=south west]above right:{$\mathit{is\_zero}$}}] 
			    (E3out1) at ([yshift=-0.8cm]E3.north east) {};
			\node[port, label={[anchor=south west]above right:{$\mathit{is\_nar}$}}] 
			    (E3out2) at ([yshift=-1.6cm]E3.north east) {};

			% Entity 1 inner
			\draw ([xshift=-0.5cm]E1out1) node[scale=0.5,iampshape,fill=white,t={\ \,MUX}] (E1mux1) {};
			\draw ([xshift=-0.5cm]E1out2) node[scale=0.5,iampshape,fill=white,t={\ \,MUX}] (E1mux2) {};
			\draw node[ieeestd not port, no input leads, no output leads] at ([xshift=0.6cm,yshift=-0.8cm]E1in2.center) (E1not) {};
			\draw[very thick] ([xshift=-0.05cm]E1not.east) -- ++(0.2,0) |- ([yshift=-0.15cm]E1mux1.west);
			\draw[very thick] ([yshift=0.15cm]E1mux2.west) --
				++(-0.25,0);
			\draw[very thick] (E1in2) -- ([xshift=0.7cm]E1in2.center) |-
				([yshift=0.15cm]E1mux1.west);
			\draw[very thick] ([yshift=-0.15cm]E1mux2.west) -|
				([xshift=0.7cm]E1in2.center |- E1in2);
			\draw[very thick] ([xshift=0.3cm]E1in2.center) |- (E1not.west);

			\draw (E1in1) -| ++(0.3,-0.4) -| ([yshift=-0.05cm]E1mux2.north);
			\draw ([xshift=0.3cm]E1in1.center) |-
				++(0.5,0.5) -|
				([yshift=-0.05cm]E1mux1.north);

			\draw[very thick] (E1mux1.east) -- (E1out1.center);
			\draw[very thick] (E1mux2.east) -- (E1out2.center);
			
			% Entity 2 inner
			\draw[very thick] (E2in2.center) -|
				node[midway,fill=white,yshift=-0.6cm,inner sep=2pt]{$7$}
				++(0.3cm,-1.8cm) --
				++(0.3,0)
				node (E2notansatz) {};
			\draw node[ieeestd not port, no input leads, no output leads] at ([xshift=0.15cm]E2notansatz) (E2not) {};
			\draw[very thick] ([xshift=-0.05cm]E2not.east) -- ++(0.3,0)
			node (E2muxansatz) {};
			\draw ([xshift=0.25cm,yshift=0.15cm]E2muxansatz) node[scale=0.5,iampshape,fill=white,t={\ \,MUX}] (E2mux1) {};
			\draw[very thick] ([yshift=0.15cm]E2mux1.west) --
				++(-0.95,0);
			\draw (E2in0.center) -| ++(0.6,0) |-
				++(0,-2.8) -|
				([yshift=-0.05cm]E2mux1.north);
			\draw[very thick] (E2mux1.east) -- ++(0.2,0) node (E2fusion) {};
			\draw[densely dashed] (E2fusion.center) -- ([yshift=0.8cm]E2fusion.center)
				node (E2fusiontop) {};
			\draw[very thick] (E2fusiontop.center) --
				node[midway,fill=white,inner sep=1pt]{$2$}
				++(-0.6,0)
				node[port,label={[anchor=east]left:{\texttt{10}}}] {};
			\draw[very thick] ([yshift=0.4cm]E2fusion.center) --
				node[midway,fill=white,inner sep=1pt]{$9$}
				++(0.5,0) node (E2shiftr1ansatz) {};
			\node[draw, scale=0.6, fill=white, minimum width=1.5cm, minimum height=1cm,
			      rounded corners=3pt, thick] (E2shiftr1) at ([xshift=0.45cm]E2shiftr1ansatz.center) {ASHIFTR};
			\draw[very thick] (E2in1.center) -| ++(1.0,-1.3) -| (E2shiftr1.north);
			\draw[very thick] (E2shiftr1.east) --
				++(0.3,0) |-
				node[midway,fill=white,xshift=-1cm,inner sep=1pt]{$9$}
				++(-2.3,1.1) |- ++(-0.6,1.3) |- ++(0.3,0.3) node (E2fusion2) {};
			\draw[densely dashed] (E2fusion2.center) -- ([yshift=0.7cm]E2fusion2.center)
				node (E2fusion2top) {};
			\draw (E2fusion2top.center) --
				++(0.4,0)
				node[port,label={[anchor=west]right:{\texttt{1}}}] {};
			\draw[very thick] ([yshift=0.2cm]E2fusion2.center) --
				node[midway,fill=white,inner sep=1pt]{$8$}
				++(0.5,0) node (E2incansatz) {};
			\node[draw, scale=0.6, fill=white, minimum width=1.5cm, minimum height=1cm,
			      rounded corners=3pt, thick] (E2inc) at ([xshift=0.45cm]E2incansatz.center) {INC};
			\draw[very thick] (E2inc.east) -|
				node[midway,xshift=-0.26cm,fill=white,inner sep=1pt]{$8$}
				++(0.5,0) node (E2fusion3) {};
			\draw[densely dashed] ([yshift=-0.2cm]E2fusion3.center) --
				([yshift=0.5cm]E2fusion3.center)
				node (E2fusion3top) {};
			\draw (E2fusion3top.center) --
				++(-0.4,0)
				node[port,label={[anchor=east]left:{\texttt{1}}}] {};
			\draw[very thick] ([yshift=-0.2cm]E2fusion3.center) -|
				++(0.3,-0.3) -|
				node[midway,xshift=0.75cm,fill=white,inner sep=1pt]{$9$}
				++(-1.5,-0.9) -- ++(0.35,0) node (E2not2ansatz) {};
			\draw node[ieeestd not port, no input leads, no output leads]
				at ([xshift=0.15cm]E2not2ansatz) (E2not2) {};
			\draw[very thick] ([xshift=-0.05cm]E2not2.east) -- ++(0.3,0)
				node (E2mux2ansatz) {};
			\draw ([xshift=0.25cm,yshift=0.15cm]E2mux2ansatz)
				node[scale=0.5,iampshape,fill=white,t={\ \,MUX}] (E2mux2) {};
			\draw[very thick] ([yshift=0.15cm]E2mux2.west) -- ++(-1,0);
			\draw ([yshift=-0.05cm]E2mux2.north) |- ++(-2.5,0.37);
			\draw[very thick] (E2mux2.east) -- ++(0.25,0) |- (E2out1.center);
			
			% Entity 3 inner
			\node[and port] (E3and1) at ([xshift=-0.6cm]E3out1) {};
			\node[and port] (E3and2) at ([xshift=-0.6cm]E3out2) {};
			\draw (E3and1.out) -- ++(0.2cm,0);
			\draw (E3and2.out) -- ++(0.2cm,0);

			\draw[very thick] (E3in1.center) -- ++(0.5,0);
			\draw[densely dashed] ([xshift=0.5cm,yshift=0.4cm]E3in1.center) node (E3fusiontop) {} --
				(E3fusiontop.center |- E3and2.in 2) node (E3fusionbot) {};
			\draw (E3fusiontop.center) -| ++(1.0,-0.7) node (E3dropbridgebot) {};

			\node (E3dropbridgemid) at (E3dropbridgebot |- E3and1.in 1) {};
			\node[not port, no output leads] (E3inv) at ([xshift=0.661cm]E3dropbridgemid) {};
			\draw (E3dropbridgemid.center |- E3inv.in) -- (E3inv.in);
			\draw ([xshift=-0.05cm]E3inv.out |- E3and1.in 1) -- (E3and1.in 1);
			\draw (E3dropbridgebot.center) -| ([xshift=-0.2cm]E3and2.in 1) --
				(E3and2.in 1);
			\draw[very thick]
				(E3fusionbot.center) --
				node[midway,fill=white,inner sep=1pt]{$n\!-\!1$}
				++(1.4,0) node (E3noransatz) {};
			\node[nor port,no input leads,no output leads] (E3nor) at ([xshift=0.2cm]E3noransatz) {};

			\draw ([xshift=-0.05cm]E3nor.out |- E3and2.in 2) -- (E3and2.in 2);
			\draw ([xshift=0.2cm]E3nor.out) |- (E3and1.in 2);
			
			% Connection between blocks
			\draw[very thick] (in1) --
				node[midway,fill=white,inner sep=1pt]{$n$}
				++(1,0);
			\draw[densely dashed] (helper1.center |- 0,3.1) --
				(helper1.center);
			\draw ([xshift=-0.01cm]helper1.center |- 0,3.1) --
				++(1.41,0) node[port,label={[anchor=south]above:{$\mathit{sign\_bit}$}}] {} --
				(helper2 |- 0,3.1) node[port]{};
			\draw (helper1.center |- E1in1) --
				++(1.4,0) node[port,label={[xshift=-0.3cm,anchor=south]above:{$\mathit{direction\_bit}$}}] {} --
				(E1in1);
			\draw[very thick] (helper1.center |- E1in2) --
				node[midway,fill=white,inner sep=1pt]{$3$}
				++(1.4,0) node[port,label={[xshift=-0.3cm,anchor=south]above:{$\mathit{regime\_bits}$}}] {} --
				(E1in2);
			\draw[very thick] ([yshift=0.8cm]helper1.center) --
				node[midway,fill=white,inner sep=1pt]{$7$}
				++(1.4,0)
				node[port,label={[xshift=1.3cm,anchor=south]above :{$\mathit{characteristic\_raw\_bits}$}}] (crb) {} -|
				([xshift=1.2cm]E1.south east) |-
				(E2in2);
			\draw[very thick] ([xshift=-0.01cm]helper1.center) --
				node[midway,fill=white,inner sep=1pt]{$n\!-\!12$}
				++(1.41,0) node[port]{} -- ++(1,0);

			\draw[densely dashed] ([xshift=1cm]crb.center) -- ++(0,-1.2) node (mantissastart) {};
			
			\draw[very thick] ([xshift=-0.01cm]mantissastart.center) --
				node[midway,xshift=-1cm,fill=white,inner sep=1pt]{$n\!-\!5$}
				++(5.12,0) node (mantissashiftstart) {};

			% mantissa shifter
			\node[draw, scale=0.6, fill=white, minimum width=1.5cm, minimum height=1cm,
			      rounded corners=3pt, thick] (shiftl) at (mantissashiftstart) {SHIFTL};
			\draw[very thick] ([xshift=0.6cm,yshift=-2.8cm]E1out2.center) -| (shiftl.north);
			\draw[very thick] (shiftl.east) --
				node[xshift=-0.5cm,midway,fill=white,inner sep=1pt]{$n\!-\!5$}
				(helper2 |- shiftl.east)
				node[port,label={[anchor=south east]above left:{$\mathit{mantissa\_bits}$}}] {};

			% precision computer
			\node[draw, scale=0.6, fill=white, minimum width=1.5cm, minimum height=1cm,
			      rounded corners=3pt, thick] (sub) at ([yshift=-1cm]mantissashiftstart.center) {SUB};
			\draw[very thick] (E1out2) --
				([xshift=0.6cm]E1out2.center) |-
				node[midway,yshift=3.5cm,fill=white,inner sep=2pt]{$3$}
				([yshift=-0.2cm]sub.west);
			\draw[very thick] ([yshift=0.2cm]sub.west) --
				++(-0.5cm,0) node[port,label={[anchor=east]left:{$n\!-\!5$}},inner sep=1pt]{};
			\draw[very thick] (sub.east) -- (helper2 |- sub.east)
				node[port,label={[anchor=south east]above left:{$\mathit{precision}$}}] {};

			\draw ([xshift=-0.5cm]E1in1.center) |- ([xshift=0.9cm]E1out1 |- 0,2.5) |- (E2in0);

			\draw[very thick] (E1out1) -- node[midway,fill=white,inner sep=1pt]{$3$} (E2in1);
			
			\draw[very thick] (E2out1) --
				node[midway,fill=white,inner sep=1pt]{$9$}
				(helper2 |- E2out1) node[port] {};

			\draw[very thick] (in1) --
				node[midway,fill=white,inner sep=2pt]{$n$} (in1 |- E3in1) --
				(E3in1);
			\draw (E3out1) -- (helper2 |- E3out1) node[port]{};
			\draw (E3out2) -- (helper2 |- E3out2) node[port]{};
		\end{circuitikz}
	\end{center}
	\caption{
		The logic circuit of the predecoder, largely separated into three main entities: the regime/antiregime determinator (E1), the characteristic/exponent determinator (E2) and the special case detector (E3). We assume $n \ge 12$ (thus omitting optional zero-expansion of $\mathit{takum}$ at the beginning for
		$n < 12$) for simplicity; the implemented 
		predecoder works for any $n \ge 2$. 
		We also assume an enabled 
		$\mathit{output\_exponent}$, as disabling it 
		would only flip the top MUX in E2. Vertical 
		dashed lines indicate where the strands of a 
		multi-signal are split up or combined.
	}
	\label{fig:schematic-predecoder}
\end{figure}
We integrate the characteristic/exponent determinator into the predecoder to determine the sign, characteristic or exponent (depending on $\mathit{output\_exponent}$), and mantissa. Both parameters $n$ and $\mathit{output\_exponent}$ are already set during synthesis, thereby eliminating many conditional operations.
In total the predecoder is made up of three entities: the regime/antiregime determinator (E1), the previously described characteristic/exponent determinator (E2) and the special cases detector (E3) \cite[rtl/{\allowbreak}decoder/{\allowbreak}predecoder.vhd, lines 74--133]{code}.
A full logic circuit is outlined in Figure~\ref{fig:schematic-predecoder}.
\par
For the decoding of logarithmic takums the predecoder output is 
utilised to concatenate the characteristic and mantissa, 
forming the barred logarithmic value as defined in
\eqref{eq:representation-barred_logarithmic} 
\cite[rtl/{\allowbreak}decoder/{\allowbreak}decoder\_logarithmic.vhd]{code}.
The linear decoder, detailed in 
\cite[rtl/{\allowbreak}decoder/{\allowbreak}decoder\_linear.vhd]{code}, 
leverages the predecoder to produce the internal representation described by 
(\ref{eq:representation-two_s_complement}). In this context, the input bit 
parameter $\mathit{output\_exponent}$ is set to $1$ within the predecoder to 
directly yield the exponent instead of the characteristic.
\section{Encoder}\label{sec:encoder}
The objective of the encoder is to transform a given internal representation into the takum binary representation for a specified $n \in \mathbb{N}_2$. The internal representation is given by $(\textcolor{sign}{S}, e, f)$ in the linear case and by $(\textcolor{sign}{S}, \overline{\ell})$ in the logarithmic case, as defined in (\ref{eq:representation-two_s_complement}) and (\ref{eq:representation-barred_logarithmic}), respectively.
\par
As discussed in Section~\ref{sec:internal_representations}, both internal representations can be readily converted into the characteristic $c$ and the mantissa $m$ (where the fraction $f$ corresponds to $m$). Therefore, we can consider both representations as equivalent starting points for either case.
The direction bit $\textcolor{direction}{D}$ (denoted as $\mathit{direction\_bit}$) can be straightforwardly determined by noting that it takes the value $1$ when the characteristic satisfies $\mathit{characteristic} \ge 0$.
\subsection{Underflow/Overflow Predictor}\label{subsec:encoder-predictor}
Early in the process, it is crucial to predict whether the 
value we intend to encode might result in either an underflow 
or overflow, particularly since we are adhering to sticky 
arithmetic. In sticky arithmetic, such under- or overflows do 
not yield zero or infinity, but instead saturate at the 
smallest or largest representable number, respectively. 
Anticipating these potential outcomes 
early on enables us to significantly reduce the time on the 
critical path, as opposed to the alternative method of 
unconditionally rounding at the end and subsequently checking 
for overflow or underflow.
\par
To predict rounding, we can differentiate between two distinct 
cases that are both handled in the encoder. The first case 
occurs when $n$ lies between $2$ and $11$. In this scenario, 
the rounding boundary is positioned within the regime or 
characteristic. As detailed in Table~\ref{tab:rounding}, we can 
see that under- and overflow can be predicted with the given 
characteristic.
\begin{table}[tbp]
	\caption{
		Bit pattern and subsequent characteristic bounds where rounding down would underflow to zero or rounding up would
		overflow to $\mathrm{NaR}$ for the special cases $n \in \{ 2,\dots,11 \}$.
		The rounding boundary is indicated with a small gap.
	}
	\label{tab:rounding}
	\centering
	\bgroup
	\def\arraystretch{1.2}
	\setlength{\tabcolsep}{0.5em}
	\begin{tabular}{| c || l | c || l | c |}
		\hline
		$n$ & round-down underflows $\le$ & $c \le$ & round-up overflows $\ge$ & c $\ge$\\\hline\hline
		2 &
			\texttt{\textcolor{sign}{X}\textcolor{direction}{0}%
				\,\textcolor{regime}{111}\textcolor{characteristic}{}\textcolor{mantissa}{1\dots1}}
			 	& -1 &
				\texttt{\textcolor{sign}{X}\textcolor{direction}{1}%
				\,\textcolor{regime}{000}\textcolor{characteristic}{}\textcolor{mantissa}{0\dots0}}
				& 0 \\\hline
		3 &
			\texttt{\textcolor{sign}{X}\textcolor{direction}{0}%
				\textcolor{regime}{0}\,\textcolor{regime}{11}\textcolor{characteristic}{1111}\textcolor{mantissa}{1\dots1}}
			 	& -16 &
				\texttt{\textcolor{sign}{X}\textcolor{direction}{1}%
				\textcolor{regime}{1}\,\textcolor{regime}{00}\textcolor{characteristic}{000000}\textcolor{mantissa}{0\dots0}}
				& 15 \\\hline
		4 &
			\texttt{\textcolor{sign}{X}\textcolor{direction}{0}%
				\textcolor{regime}{00}\,\textcolor{regime}{1}\textcolor{characteristic}{111111}\textcolor{mantissa}{1\dots1}}
			 	& -64 &
				\texttt{\textcolor{sign}{X}\textcolor{direction}{1}%
				\textcolor{regime}{11}\,\textcolor{regime}{0}\textcolor{characteristic}{0000000}\textcolor{mantissa}{0\dots0}}
				& 63 \\\hline
		5 &
			\texttt{\textcolor{sign}{X}\textcolor{direction}{0}%
				\textcolor{regime}{000}\,\textcolor{characteristic}{1111111}\textcolor{mantissa}{1\dots1}}
			 	& -128 &
				\texttt{\textcolor{sign}{X}\textcolor{direction}{1}%
				\textcolor{regime}{111}\,\textcolor{characteristic}{0000000}\textcolor{mantissa}{0\dots0}}
				& 127\\\hline
		6 &
			\texttt{\textcolor{sign}{X}\textcolor{direction}{0}%
				\textcolor{regime}{000}%
				\textcolor{characteristic}{0}\,\textcolor{characteristic}{111111}\textcolor{mantissa}{1\dots1}}
			 	& -192 &
				\texttt{\textcolor{sign}{X}\textcolor{direction}{1}%
				\textcolor{regime}{111}%
				\textcolor{characteristic}{1}\,\textcolor{characteristic}{000000}\textcolor{mantissa}{0\dots0}}
				& 191 \\\hline
		7 &
			\texttt{\textcolor{sign}{X}\textcolor{direction}{0}%
				\textcolor{regime}{000}%
				\textcolor{characteristic}{00}\,\textcolor{characteristic}{11111}\textcolor{mantissa}{1\dots1}}
			 	& -224 &
				\texttt{\textcolor{sign}{X}\textcolor{direction}{1}%
				\textcolor{regime}{111}%
				\textcolor{characteristic}{11}\,\textcolor{characteristic}{00000}\textcolor{mantissa}{0\dots0}}
				& 223 \\\hline
		8 &
			\texttt{\textcolor{sign}{X}\textcolor{direction}{0}%
				\textcolor{regime}{000}%
				\textcolor{characteristic}{000}\,\textcolor{characteristic}{1111}\textcolor{mantissa}{1\dots1}}
			 	& -240 &
				\texttt{\textcolor{sign}{X}\textcolor{direction}{1}%
				\textcolor{regime}{111}%
				\textcolor{characteristic}{111}\,\textcolor{characteristic}{0000}\textcolor{mantissa}{0\dots0}}
				& 239 \\\hline
		9 &
			\texttt{\textcolor{sign}{X}\textcolor{direction}{0}%
				\textcolor{regime}{000}%
				\textcolor{characteristic}{0000}\,\textcolor{characteristic}{111}\textcolor{mantissa}{1\dots1}}
			 	& -248 &
				\texttt{\textcolor{sign}{X}\textcolor{direction}{1}%
				\textcolor{regime}{111}%
				\textcolor{characteristic}{1111}\,\textcolor{characteristic}{000}\textcolor{mantissa}{0\dots0}}
				& 247 \\\hline
		10 &
			\texttt{\textcolor{sign}{X}\textcolor{direction}{0}%
				\textcolor{regime}{000}%
				\textcolor{characteristic}{00000}\,\textcolor{characteristic}{11}\textcolor{mantissa}{1\dots1}}
			 	& -252 &
				\texttt{\textcolor{sign}{X}\textcolor{direction}{1}%
				\textcolor{regime}{111}%
				\textcolor{characteristic}{11111}\,\textcolor{characteristic}{00}\textcolor{mantissa}{0\dots0}}
				& 251 \\\hline
		11 &
			\texttt{\textcolor{sign}{X}\textcolor{direction}{0}%
				\textcolor{regime}{000}%
				\textcolor{characteristic}{000000}\,\textcolor{characteristic}{1}\textcolor{mantissa}{1\dots1}}
			 	& -254 &
				\texttt{\textcolor{sign}{X}\textcolor{direction}{1}%
				\textcolor{regime}{111}%
				\textcolor{characteristic}{111111}\,\textcolor{characteristic}{0}\textcolor{mantissa}{0\dots0}}
				& 253 \\\hline
	\end{tabular}
	\egroup
\end{table}
\par
The second scenario considers the case where $n \ge 12$. By construction, it is evident that a necessary condition for underflow when rounding down and overflow when rounding up occurs when the first $12$ bits take the forms \texttt{\textcolor{sign}{X}\textcolor{direction}{0}%
\textcolor{regime}{000}\textcolor{characteristic}{0000000}} and \texttt{\textcolor{sign}{X}\textcolor{direction}{1}%
\textcolor{regime}{111}\textcolor{characteristic}{111111}} respectively. This is because the direction bit and the regime bits being all zeros or all ones directly imply that the regime value is $7$. Consequently, in the overflow scenario, the necessary condition $c = -255$ or $c = 254$ is satisfied.
\par
To establish a sufficient condition for underflow or overflow, we must examine the mantissa bits. For underflow, these bits must be all zeros, and for overflow, they must be all ones, up to and including the most significant rounding bit, which is consistently positioned since the regime value is always $7$. Given that the bits preceding the mantissa have a fixed length of $12$, we need to inspect the $n - 11$ most significant mantissa bits.
When rounding down would result in underflow or rounding up 
would lead to overflow, an output signal is generated to 
indicate the respective condition for later usage in the encoder
by the rounder
\cite[rtl/{\allowbreak}encoder/{\allowbreak}postencoder.vhd, 
lines 33--107]{code}.
\subsection{Characteristic Precursor Determinator}\label{subsec:encoder-characteristic_precursor_determinator}
\label{subsec:characteristic_precursor}
The initial step in the encoding process involves determining the regime and characteristic bits from the given characteristic. To facilitate this, the characteristic is first normalised into a format that allows for the efficient extraction of both the regime and characteristic bits, regardless of the scenario. We refer to this specific format as the \enquote{characteristic precursor}. The characteristic precursor can be derived using the following
\begin{proposition}[characteristic 
precursor]\label{prop:characteristic_precursor}
	Let $n \in \mathbb{N}_1$ and bit string
	$T := (\textcolor{sign}{S},\textcolor{direction}{D},\textcolor{regime}{R},
	\textcolor{characteristic}{C},\textcolor{mantissa}{M})
	\in {\{0,1\}}^n$ as in Definition~\ref{def:takum} with
	$\takum((\textcolor{sign}{S},\textcolor{direction}{D}, 
	\textcolor{regime}{R},
	\textcolor{characteristic}{C},\textcolor{mantissa}{M})) \notin 
	\{0,\mathrm{NaR} 
	\}$, two's complement characteristic $c$ and regime $r$.
	It holds
	\begin{equation}
		\begin{cases}
			-c-1 & \textcolor{direction}{D} = 0\\
			c & \textcolor{direction}{D} = 1
		\end{cases} + 1 =
		\begin{cases}
			2^r + \sum_{i=0}^{r-1}
				\overline{\textcolor{characteristic}{C}}_i 2^i
				& \textcolor{direction}{D} = 0\\
			2^r + \sum_{i=0}^{r-1}
				\textcolor{characteristic}{C}_i 2^i
				& \textcolor{direction}{D} = 1.
		\end{cases}
	\end{equation}
	It should be noted that $-c-1$ can be obtained from $c$ 
	by a simple bitwise complement, assuming $c$ is 
	represented as a two's complement integer.
\end{proposition}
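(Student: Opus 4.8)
The plan is to verify the claimed identity by a direct case split on the direction bit $\textcolor{direction}{D}$, using only the definition of the characteristic $c$ in (\ref{eq:def-characteristic}) together with two elementary facts: first, that two's complement negation obeys $\lnot c = -c - 1$; and second, that complementing all $r$ characteristic bits sends $\integer(\textcolor{characteristic}{C})$ to $2^r - 1 - \integer(\textcolor{characteristic}{C})$, i.e.\ $\sum_{i=0}^{r-1} \overline{\textcolor{characteristic}{C}}_i 2^i = (2^r - 1) - \sum_{i=0}^{r-1} \textcolor{characteristic}{C}_i 2^i$.

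First I would dispatch the case $\textcolor{direction}{D} = 1$. Here (\ref{eq:def-characteristic}) gives $c = 2^r - 1 + \integer(\textcolor{characteristic}{C}) = 2^r - 1 + \sum_{i=0}^{r-1} \textcolor{characteristic}{C}_i 2^i$, so the left-hand side is $c + 1 = 2^r + \sum_{i=0}^{r-1} \textcolor{characteristic}{C}_i 2^i$, which is exactly the stated right-hand side; nothing further is needed in this branch.

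Next I would handle $\textcolor{direction}{D} = 0$. Now (\ref{eq:def-characteristic}) gives $c = -2^{r+1} + 1 + \sum_{i=0}^{r-1} \textcolor{characteristic}{C}_i 2^i$, so the left-hand side is $\lnot c + 1 = (-c - 1) + 1 = -c = 2^{r+1} - 1 - \sum_{i=0}^{r-1} \textcolor{characteristic}{C}_i 2^i$. On the other side, substituting the bit-complement identity into $2^r + \sum_{i=0}^{r-1} \overline{\textcolor{characteristic}{C}}_i 2^i$ yields $2^r + (2^r - 1) - \sum_{i=0}^{r-1} \textcolor{characteristic}{C}_i 2^i = 2^{r+1} - 1 - \sum_{i=0}^{r-1} \textcolor{characteristic}{C}_i 2^i$, matching the left-hand side. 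This closes the second branch and hence the proof.

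The only point requiring any care — and thus the main, though minor, obstacle — is ensuring the two's complement manipulations are valid at the fixed 9-bit width in play: since $\takum(T) \notin \{0, \mathrm{NaR}\}$ we have $c \in \{-255, \dots, 254\}$, and therefore $-c - 1 \in \{-255, \dots, 254\}$ as well, so both $c$ and $\lnot c$ are representable without overflow in the 9-bit signed field and the identity $\lnot c = -c - 1$ applies verbatim. Everything else reduces to the substitutions above.
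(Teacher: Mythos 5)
Your proposal is correct and follows essentially the same argument as the paper: a case split on $\textcolor{direction}{D}$, using $\lnot c = -c-1$ for two's complement and the bit-complement identity $\sum_{i=0}^{r-1}\overline{\textcolor{characteristic}{C}}_i 2^i = 2^r - 1 - \sum_{i=0}^{r-1}\textcolor{characteristic}{C}_i 2^i$. The only cosmetic difference is that you reduce both sides of the $\textcolor{direction}{D}=0$ case to a common expression rather than rewriting the left side into the right, and you add a (welcome but unneeded by the paper) remark on 9-bit representability.
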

\begin{proof}
	Let $\textcolor{direction}{D} = 0$.
	Negating any unsigned $m$-bit integer $\ell 
	\in \mathbb{N}_0$ yields $2^m - 1 - \ell$ (used as 
	statement \enquote{a} in the subsequent derivation).
	It follows
	\begin{align}
		(-c - 1) + 1 &=
			-c\\
		&= -\left( -2^{r+1} +1 + \sum_{i=0}^{r-1}
			\textcolor{characteristic}{C}_i 2^i \right)\\
		&= 2^{r+1} - 1 - \sum_{i=0}^{r-1}
			\textcolor{characteristic}{C}_i 2^i\\
		&\overset{a}{=} 2^{r+1} - 1 - \left(
				2^r - 1 - \sum_{i=0}^{r-1}
				\overline{\textcolor{characteristic}{C}}_i 2^i
			\right)\\
		&= 2^r + \sum_{i=0}^{r-1}
			\overline{\textcolor{characteristic}{C}}_i 2^i.
	\end{align}
	Let $\textcolor{direction}{D} = 1$. Then it holds
	\begin{equation}
		c + 1 = 2^r - 1 + \left( \sum_{i=0}^{r-1}
		\textcolor{characteristic}{C}_i 2^i \right) + 1
		= 2^r + \sum_{i=0}^{r-1}
		\textcolor{characteristic}{C}_i 2^i.
	\end{equation}
	As both cases yield the desired results we have proven what was to be 
	shown.
	\qed
\end{proof}
Refer to \cite[rtl/{\allowbreak}encoder/{\allowbreak}postencoder.vhd, lines 109--117]{code} for the VHDL implementation of the characteristic precursor determination. As can be observed, only the 8 least significant bits of the characteristic are negated. This approach stems from the fact that the normalised characteristic, by construction, always has a zero in the most significant bit. Consequently, the subsequent incrementation is performed on a \enquote{standard-form} 8-bit bit string. It is important to note that employing the precursor enables us to circumvent a potentially necessary decrementation (see Definition~\ref{def:takum}), while the incrementation can be efficiently implemented using only half-adders.
\par
However, it is crucial to bear in mind that the characteristic bits will need 
to be negated at a later stage. This step cannot be undertaken at the current 
moment, as the regime value $r$ has not yet been extracted.
\subsection{8-Bit Leading One Detector (LOD)}\label{subsec:encoder-lzc}
By construction, the characteristic precursor, represented as 
an 8-bit unsigned integer, has its most significant set bit 
always at offset $r$ (see 
Proposition~\ref{prop:characteristic_precursor}), where $r$ 
denotes the regime. To determine the regime, it is necessary to 
employ an 8-bit leading one detector (LOD). We adopt the design 
proposed in \cite{lod-2021}, which divides the 8-bit number 
into two 4-bit segments. Each segment is then processed through 
a lookup table, which yields the offset value of the most 
significant bit set to $1$, namely the desired regime value 
$r$. 
\par
If the most significant four bits are all zero, the result from the lower bits' 
lookup table is returned. Conversely, if any of the highest four bits are set 
to $1$, the result from the higher bits' lookup table is returned, with $4$ 
added to account for the fact that the leading one inherently has an offset of 
$4$ due to its position. This adjustment ensures accurate determination of the 
regime. For a detailed implementation, refer to 
\cite[rtl/{\allowbreak}encoder/{\allowbreak}postencoder.vhd, lines 
119--155]{code}.
\subsection{Extended Takum 
Generator}\label{subsec:encoder-extended_takum_generator}
Given the tapered-precision nature of takums, the number of 
mantissa bits varies; without loss of generality, for $n>12$ it 
lies between $n-12$ and $n-5$. The encoder input for the 
mantissa bits has a fixed size of $n-5$ (and is non-existent 
for $n < 5$). However, a regime value of $7$ implies that the 
actual mantissa bit count in the output is $n-12$, not $n-5$.
\par
Rounding the mantissa bits at this stage is not advisable, 
since rounding also occurs in the non-mantissa bits and may 
introduce carry bits originating from the mantissa rounding that
would need to be preserved for later non-mantissa rounding. 
Instead, we construct an \enquote{extended takum} of precision 
$n+7$, which fully accommodates a mantissa of length $n-5$ even 
in the case where there are $7$ characteristic bits. This 
extended takum can then be correctly rounded to $n$ bits in the 
separate rounder.
\par
The initial step in generating the extended takum involves deriving the regime bits from the regime. According to the definition, this process entails examining the direction bit. As detailed in Section~\ref{subsec:characteristic_precursor}, the characteristic precursor must also be inverted to obtain the coded characteristic bits, which occurs during the direction bit check.
\par
Upon inverting the characteristic precursor, the resulting characteristic bits are contained within the lower 7 bits of the output. Given that the characteristic bits vary in length from 0 to 7, there will be unused bits in the upper $7 - \mathit{regime}$ bits. This is not problematic, as we subsequently combine these bits with the mantissa bits and $7$ zero bits, then shift the entire sequence $\mathit{regime}$ bits to the right. This ensures that the characteristic bits always start at index $n+1$ and are followed by the mantissa bits, as intended.
\par
We discard the highest 7 bits of the shifted characteristic and 
mantissa bits, yielding a bit string of length $n+2$. 
Considering that the sign bit, direction bit, and three regime 
bits together occupy a total of 5 bits, the combination of 
these with the previous bit string forms the $(n+7)$-bit 
extended takum
\cite[rtl/{\allowbreak}encoder/{\allowbreak}postencoder.vhd, 
lines 157--176]{code}.
\par
It is noteworthy that the shifter, regardless of $n$, is constrained by a maximum shift offset of 7 (three control bits). This contrasts sharply with posits, where the tapering is unbounded and a shifter must accommodate nearly the entire width $n$.
\subsection{Rounder}
Having determined whether rounding up or down results in an overflow or underflow (see Section~\ref{subsec:encoder-predictor}), and having obtained the $(n+7)$-bit extended takum (see Section~\ref{subsec:encoder-extended_takum_generator}), we can now proceed to combine these elements to achieve a properly rounded $n$-bit takum.
\par
The initial step involves generating two rounding candidates: one for rounding up and one for rounding down. They are obtained
by discarding the extended takum's 7 least significant bits, yielding the rounding-down candidate, and incrementing it to
yield the rounding-up candidate.
The final result is determined by rounding up if either: (1) 
rounding down would cause an underflow; or
(2) rounding up would not cause an overflow, the most significant rounding bit in the extended takum is set and there
is either no tie or we are tied and the rounding-down candidate 
is odd (as we tie to even). If these conditions are not met, we 
opt for the rounding-down candidate.
\par
This approach is preferred over directly incrementing the takum with the 
rounding bit by the result of the logic expression, as it optimises timing: 
while we can immediately start incrementing and determining whether to round up 
or down as soon as the input arrives, direct incrementing would require waiting 
for the logic expression to evaluate before beginning the computation 
\cite[rtl/{\allowbreak}encoder/{\allowbreak}postencoder.vhd, lines 
178--190]{code}.
\subsection{Post-, Logarithmic and Linear Encoder}
\begin{figure}[tbp]
	\begin{center}
		\begin{circuitikz}[scale=0.8, transform shape]
			% Input ports
			\node[port, label={[anchor=south west]above right:{$\mathit{sign\_bit}$}}] 
			    (input_sign_bit) at (0,0) {};
			\node[port, label={[anchor=south west]above right:{$\mathit{characteristic}$}}] 
				(input_characteristic) at ([yshift=-1.7cm]input_sign_bit) {};
			\node[port, label={[anchor=south west]above right:{$\mathit{mantissa\_bits}$}}] 
				(input_mantissa_bits) at ([yshift=-1.7cm]input_characteristic) {};
			\node[port, label={[anchor=south west]above right:{$\mathit{is\_zero}$}}] 
				(input_is_zero) at ([yshift=-1.7cm]input_mantissa_bits) {};
			\node[port, label={[anchor=south west]above right:{$\mathit{is\_nar}$}}] 
				(input_is_nar) at ([yshift=-1.7cm]input_is_zero) {};

			% Entities
			\node
			    [draw, rectangle, densely dotted, thick, minimum width=5cm, minimum height=3.5cm,
			    label={[anchor=south east]south east:{E1}}]
			    (E1) at ([xshift=4.1cm,yshift=-6.05cm]input_is_nar.center) {};
			\node
			    [draw, rectangle, densely dotted, thick, minimum width=3cm, minimum height=3cm,
			    label={[anchor=south east]south east:{E2}}]
			    (E2) at ([xshift=6.2cm,yshift=-0.8cm]input_sign_bit.center) {};
			\node
			    [draw, rectangle, densely dotted, thick, minimum width=5cm, minimum height=3cm,
			    label={[anchor=south east]south east:{E4}}]
			    (E4) at ([xshift=7.3cm,yshift=1.05cm]E1.center) {};
			\node
			    [draw, rectangle, densely dotted, thick, minimum width=6cm, minimum height=4.8cm,
			    label={[anchor=south east]south east:{E3}}]
			    (E3) at ([xshift=-1.7cm,yshift=5.45cm]E4.center) {};
			\node
			    [draw, rectangle, densely dotted, thick, minimum width=3.5cm, minimum height=3cm,
			    label={[anchor=south east]south east:{E5}}]
			    (E5) at ([xshift=5.3cm,yshift=0cm]E2.center) {};

			% E1 ports
			\node[port] 
				(E1_in_characteristic) at ([yshift=-0.8cm]E1.north west) {};
			\node[port] 
				(E1_in_mantissa_bits) at ([yshift=-1.6cm]E1.north west) {};
			\node[port,label={[anchor=south west,align=left,xshift=-0.05cm]above right:{$\mathit{round\_up\_}$\\$\mathit{overflows}$}}] 
				(E1_out_round_up_overflows) at ([yshift=-0.8cm]E1.north east) {};
			\node[port,label={[anchor=north west,align=left,xshift=-0.05cm]below right:{$\mathit{round\_down\_}$\\$\mathit{underflows}$}}] 
				(E1_out_round_down_underflows) at ([yshift=-1.6cm]E1.north east) {};

			% E2 ports
			\node[port] 
				(E2_in_direction_bit) at ([yshift=-0.8cm]E2.north west) {};
			\node[port] 
				(E2_in_characteristic) at ([yshift=-1.6cm]E2.north west) {};
			\node[port] 
				(E2_out_characteristic_precursor) at ([yshift=-0.8cm]E2.north east) {};

			% E3 ports
			\node[port] 
				(E3_in_sign_bit) at ([yshift=-0.6cm]E3.north west) {};
			\node[port] 
				(E3_in_direction_bit) at ([yshift=-1.2cm]E3.north west) {};
			\node[port] 
				(E3_in_regime) at ([yshift=-1.8cm]E3.north west) {};
			\node[port] 
				(E3_in_characteristic_precursor) at ([yshift=-2.4cm]E3.north west) {};
			\node[port] 
				(E3_in_mantissa_bits) at ([yshift=-4.0cm]E3.north west) {};
			\node[port] 
				(E3_out_extended_takum) at ([yshift=-0.8cm]E3.north east) {};
			
			% E4 ports
			\node[port]
				(E4_in_extended_takum) at ([yshift=-0.8cm]E4.north west) {};
			\node[port] 
				(E4_in_round_up_overflows) at ([yshift=-1.6cm]E4.north west) {};
			\node[port] 
				(E4_in_round_down_underflows) at ([yshift=-2.4cm]E4.north west) {};
			\node[port] 
				(E4_out_takum_rounded) at ([yshift=-0.8cm]E4.north east) {};
			
			% E5 ports
			\node[port]
				(E5_in_takum_rounded) at ([yshift=-0.8cm]E5.north west) {};
			\node[port] 
				(E5_in_is_zero) at ([yshift=-1.6cm]E5.north west) {};
			\node[port] 
				(E5_in_is_nar) at ([yshift=-2.4cm]E5.north west) {};
			\node[port,label={[xshift=-0.1cm]above right:{$\mathit{takum}$}}] 
				(E5_out_takum) at ([yshift=-0.8cm]E5.north east) {};

			% Inter-Entity-Connections
			\draw (input_sign_bit.center) --
				++(3.2cm, 0)
				node(sign_bit_pickup_line) {} |-
				(E3_in_sign_bit.center);
			\draw[very thick] (input_characteristic.center) --
				++(2.8cm, 0)
				node(characteristic_pickup_line) {} --
				++(0,-7.5cm) --
				++(-2.4cm,0) |-
				(E1_in_characteristic)
				node[midway,fill=white,inner sep=1pt,xshift=0.4cm]{$9$};
			\draw[very thick] (input_mantissa_bits.center) --
				++(2.4cm, 0)
				node(mantissa_pickup_line) {} --
				++(0, -5.5cm) --
				++(-2.4cm, 0cm) |-
				(E1_in_mantissa_bits)
				node[midway,fill=white,inner sep=1pt,xshift=0.55cm]{$n\!-\!5$};
			\draw (input_is_zero.center) --
				++(2.0cm, 0) --
				++(0, -4.4cm) -|
				([xshift=0.8cm,yshift=1.6cm]E3_out_extended_takum.center) -|
				([xshift=-0.8cm]E5_in_is_zero.center) --
				(E5_in_is_zero.center);
			\draw (input_is_nar.center) --
				++(1.6cm, 0) --
				++(0, -3.1cm) -|
				([xshift=1.2cm,yshift=2.0cm]E3_out_extended_takum.center) -|
				([xshift=-0.4cm]E5_in_is_nar.center) --
				(E5_in_is_nar.center);

			\draw (E1_out_round_up_overflows.center) -- (E4_in_round_up_overflows.center);
			\draw (E1_out_round_down_underflows.center) -- (E4_in_round_down_underflows.center);

			\draw[very thick] (characteristic_pickup_line.center) --
				++(0.8cm,0) |-
				(E2_in_characteristic.center);

			\draw[very thick] (mantissa_pickup_line.center |- E3_in_mantissa_bits.center) --
				node[midway,fill=white,inner sep=1pt]{$n\!-\!5$}
				(E3_in_mantissa_bits.center);
			
			\draw[very thick]
				(characteristic_pickup_line.center |- E3_in_direction_bit) --
				++(0.6cm,0)
				node[midway,fill=white,inner sep=1pt]{$9$}
				node(direction_bit_split_top) {};
			\draw[densely dashed] (direction_bit_split_top.center) --
				++(0,-0.8cm)
				node(direction_bit_split_bot) {};
			\draw[very thick]
				(direction_bit_split_bot.center) --
				node[midway,fill=white,inner sep=1pt]{$8$}
				++(0.6cm,0);
			\draw
				(direction_bit_split_top.center) --
				++(0.4cm,0)
				node(direction_bit_inv_ansatz) {};
			\draw node[ieeestd not port, no input leads, no output leads]
				at ([xshift=0.15cm]direction_bit_inv_ansatz.center) (direction_bit_not) {};
			\draw ([xshift=-0.05cm]direction_bit_not.out) -- ++(0.2cm,0)
				node[port, label={[anchor=north,align=left,xshift=0.05cm,yshift=-0.05cm]below:{$\mathit{direction\_bit}$}}] (direction_bit_ansatz) {};

			\draw (direction_bit_ansatz.center) -- (E3_in_direction_bit.center);
			\draw (direction_bit_ansatz.center) |- (E2_in_direction_bit.center);
			
			\draw[very thick]
				(E2_out_characteristic_precursor.center) -|
				++(0.4cm,-3.05cm) -|
				node[midway,fill=white,inner sep=1pt,xshift=1.3cm]{$8$}
				([xshift=1.1cm]direction_bit_ansatz.center)
				node(helper1) {} |-
				(helper1.center |- E3_in_characteristic_precursor.center) --
				++(-2.05cm,0) --
				++(0,-0.5cm)
				node (regime_split) {};
			\draw[very thick]
				(regime_split.center) --
				++(0.5cm,0)
				node(lod8_ansatz) {};
			\node[draw, scale=0.6, fill=white, minimum width=1.5cm, minimum height=1cm,
			      rounded corners=3pt, thick] (lod8)
				at ([xshift=0.45cm]lod8_ansatz.center) {LOD8};
			\draw[very thick]
				(lod8.east) --
				node[midway,fill=white,inner sep=1pt]{$3$}
				++(1.1cm,0) |-
				(E3_in_regime);
			\draw[very thick]
				(helper1.center |- E3_in_characteristic_precursor.center) --
				(E3_in_characteristic_precursor.center);

			\draw[very thick] (E3_out_extended_takum.center) -|
				++(0.4cm,-4.4cm) -|
				([xshift=-0.4cm]E4_in_extended_takum.center) --
				(E4_in_extended_takum.center);

			\draw[very thick]
				(E4_out_takum_rounded.center) --
				++(0.4cm,0) |-
				([yshift=1.2cm]E3_out_extended_takum.center) -|
				node[midway,fill=white,inner sep=1pt,xshift=2.5cm]{$n$}
				([xshift=-1.2cm]E5_in_takum_rounded.center) --
				(E5_in_takum_rounded.center);

			\draw[very thick]
				(E5_out_takum) --
				node[midway,fill=white,inner sep=1pt]{$n$}
				++(1.0cm,0)
				node[port] {};

			% E1 inner
			\draw[very thick] (E1_in_characteristic) -|
				++(0.5cm,0.4cm) -- ++(1.5cm,0)
				node(E1cmp1ansatz) {};
			\draw[very thick]  ([xshift=0.5cm]E1_in_characteristic.center) |-
				++(1.5cm,-0.5cm)
				node(E1cmp2ansatz) {};
			\node[draw, scale=0.6, fill=white, minimum width=1.5cm, minimum height=1cm,
				rounded corners=3pt, thick] (E1cmp1)
				at ([xshift=0.45cm,yshift=-0.15cm]E1cmp1ansatz.center) {CMPEQ};
			\node[draw, scale=0.6, fill=white, minimum width=1.5cm, minimum height=1cm,
				rounded corners=3pt, thick] (E1cmp2)
				at ([xshift=0.45cm,yshift=-0.15cm]E1cmp2ansatz.center) {CMPEQ};
			\draw[very thick] ([yshift=-0.15cm]E1cmp1.west) --
				node[midway,fill=white,inner sep=1pt,xshift=0.04cm]{$9$}
				++(-0.5cm,0)
				node[port,label={[label distance=-0.1cm]left:{$254$}}] {};
			\draw[very thick] ([yshift=-0.15cm]E1cmp2.west) --
				node[midway,fill=white,inner sep=1pt,xshift=0.04cm]{$9$}
				++(-0.5cm,0)
				node[port,label={[label distance=-0.1cm]left:{$-255$}}] {};

			\draw[very thick] (E1_in_mantissa_bits) --
				++(0.5cm,0) node(E1split) {};
			\draw[densely dashed]
				(E1split.center) --
				++(0,-1.4cm) node(E1split_bot){};
			\node(E1split_top) at ([yshift=-0.8cm]E1split.center) {};
			\draw[very thick] (E1split_bot.center) -- ++(0.8,0)
				node[midway,fill=white,inner sep=1pt]{$6$}
				node[port] {};
			\draw[very thick] (E1split_top.center) -- ++(1.3,0)
				node[midway,fill=white,inner sep=1pt]{$n\!-\!11$}
				node[port](E1aftersplit) {};

			\node[and port,no input leads,no output leads] (E1and)
				at ([xshift=0.9cm]E1aftersplit.center) {};
			\node[nor port,no input leads,no output leads] (E1nor)
				at ([xshift=0.85cm,yshift=-0.6cm]E1aftersplit.center) {};
			\draw[very thick] (E1aftersplit.center) --
				([xshift=0.1cm]E1and.west);
			\draw[very thick] ([xshift=0.2cm]E1aftersplit.center) |-
				([xshift=0.1cm]E1nor.west);

			\node[and port] (E1and1) at ([xshift=-0.6cm]E1_out_round_up_overflows) {};
			\node[and port] (E1and2) at ([xshift=-0.6cm]E1_out_round_down_underflows) {};
			\draw (E1and1.out) -- (E1_out_round_up_overflows |- E1and1.out);
			\draw (E1and2.out) -- (E1_out_round_up_overflows |- E1and2.out);
			\draw (E1cmp1.east) -| (E1and1.in 1);
			\draw (E1cmp2.east) -| (E1and2.in 1);
			\draw ([xshift=-0.1cm]E1and.out) -| ([xshift=-0.8cm]E1and1.in 2) --
				(E1and1.in 2);
			\draw ([xshift=-0.1cm]E1nor.out) -| ([xshift=-0.4cm]E1and2.in 2) --
				(E1and2.in 2);

			% E2 inner
			\draw[very thick] (E2_in_characteristic.center) --
				++(0.25cm,0) node(E2split) {};
			\draw[densely dashed]
				([xshift=-0.006cm]E2split.center) --
				++(0,-0.9cm) node(E2split_bot){};
			\node(E2split_top) at ([yshift=-0.3cm]E2split.center) {};
			\draw[very thick] ([xshift=-0.006cm]E2split_bot.center) -- ++(1.0,0)
				node[midway,fill=white,inner sep=1pt]{$8$}
				node[port](E2aftersplit) {};
			\draw (E2split_top.center) -- ++(1.0,0)
				node[port] {};

			\draw ([xshift=1.2cm]E2_in_direction_bit)
				node[scale=0.5,iampshape,fill=white,t={\ \,MUX}] (E2mux) {};
			\draw[very thick] (E2aftersplit.center) -- ++(0.7cm,0)
				node(E2invansatz) {};
			\draw[very thick] ([xshift=0.4cm]E2aftersplit.center) |-
				++(-1.0cm,0.9cm) |- ([yshift=-0.15cm]E2mux.west);
			\draw node[ieeestd not port, no input leads, no output leads] at ([xshift=0.15cm]E2invansatz.center) (E2not) {};
			\draw[very thick] ([xshift=-0.06cm]E2not.out) -| ++(0.3cm,1.15cm)
				-- ++(-2.1cm,0) |- ([yshift=0.15cm]E2mux.west);

			\draw (E2_in_direction_bit.center) -| ++(0.244cm,0.55cm) -|
				([yshift=-0.05cm]E2mux.north);

			\draw[very thick] (E2mux.east) -- ++(0.1,0)
				node(E2incansatz) {};
			\node[draw, scale=0.6, fill=white, minimum width=1.5cm, minimum height=1cm,
			      rounded corners=3pt, thick] (E2inc) at ([xshift=0.45cm]E2incansatz.center) {INC};
			\draw[very thick]
				(E2inc.east) --
				node[midway,fill=white,inner sep=1pt,xshift=-0.02cm]{$8$}
				(E2_out_characteristic_precursor.center);

			% E3 inner
			\draw[very thick] (E3_out_extended_takum.center) --
				node[midway,fill=white,inner sep=1pt]{$n\!+\!7$}
				++(-1.2cm,0)
				node(E3_fusion_top) {};
			\draw[densely dashed] ([yshift=0.2cm]E3_fusion_top.center) --
				++(0,-3.85cm)
				node(E3_fusion_bot) {};
			\draw (E3_in_sign_bit.center) --
				(E3_fusion_top.center |- E3_in_sign_bit.center);
			\draw (E3_in_direction_bit.center) --
				(E3_fusion_top.center |- E3_in_direction_bit.center);
			
			\draw[very thick] (E3_in_regime.center) -- ++(0.5cm,0)
				node(E3_regime_split) {};
			\draw[very thick] (E3_regime_split.center) |- ++(0.2cm,0.15cm)
				node(E3_regime_inv_ansatz) {};
			\draw node[ieeestd not port, no input leads, no output leads]
				at ([xshift=0.15cm]E3_regime_inv_ansatz.center) (E3not1) {};
			\draw[very thick] ([xshift=-0.06cm]E3not1.out) -- ++(0.2cm,0)
				node(E3_mux1tip) {};
			\draw[very thick] (E3_regime_split.center) |- ++(0.75cm,-0.15cm);
			\draw ([xshift=0.25cm,yshift=-0.15cm]E3_mux1tip)
				node[scale=0.5,iampshape,fill=white,t={\ \,MUX}] (E3mux1) {};
			\draw ([yshift=-0.05cm]E3mux1.north) --
				(E3mux1.north |- E3_in_direction_bit.center);
			\draw[very thick] (E3mux1.east) --
				node[midway,fill=white,inner sep=1pt]{$3$}
				(E3_fusion_top |- E3mux1.east);

			\draw[very thick] (E3_in_characteristic_precursor.center) --
				++(0.5cm,0) node(E3split) {};
			\draw[densely dashed]
				([xshift=-0.006cm]E3split.center) --
				++(0,-0.6cm) node(E3split_bot){};
			\node(E3split_top) at ([yshift=-0.2cm]E3split.center) {};
			\draw[very thick] ([xshift=-0.006cm]E3split_bot.center) -- ++(0.6,0)
				node[midway,fill=white,inner sep=1pt]{$7$}
				node[port](E3aftersplit) {};
			\draw (E3split_top.center) -- ++(0.6,0)
				node[port] {};

			\draw[very thick] (E3aftersplit.center) -- ++(0.2cm,0)
				node(E3_characteristic_precursor_split) {};
			\draw[very thick] (E3_characteristic_precursor_split.center) |- ++(0.2cm,0.15cm)
				node(E3_characteristic_precursor_inv_ansatz) {};
			\draw node[ieeestd not port, no input leads, no output leads]
				at ([xshift=0.15cm]E3_characteristic_precursor_inv_ansatz.center) (E3not2) {};
			\draw[very thick] ([xshift=-0.06cm]E3not2.out) -- ++(0.2cm,0)
				node(E3_mux2tip) {};;
			\draw[very thick] (E3_characteristic_precursor_split.center) |- ++(0.75cm,-0.15cm);
			\draw ([xshift=0.25cm,yshift=-0.15cm]E3_mux2tip)
				node[scale=0.5,iampshape,fill=white,t={\ \,MUX}] (E3mux2) {};
			\draw ([yshift=-0.05cm]E3mux2.north) --
				(E3mux2.north |- E3_in_direction_bit.center);
			\draw[very thick] (E3mux2.east) -| ++(0.2cm,-0.5cm) --
				node[midway,fill=white,inner sep=1pt]{$7$}
				++(-2.25cm,0.0) |- ++(0.9cm,-0.3cm)
				node(E3_split2_top) {};
			\draw[densely dashed](E3_split2_top.center) -- ++(0,-0.6cm)
				node(E3_split2_bot) {};
			\draw[very thick] (E3_in_mantissa_bits.center) --
				(E3_split2_top |- E3_in_mantissa_bits.center);
			\draw[very thick] (E3_split2_bot.center) --
				node[midway,fill=white,inner sep=1pt]{$7$}
				++(-0.6cm,0)
				node[port,label={[label distance=-0.1cm]left:{$0$}}]{};
			\draw[very thick] ([yshift=0.2cm]E3_split2_bot.center) --
				node[midway,fill=white,inner sep=1pt]{$n\!+\!9$}
				++(1.1cm,0)
				node(E3_shiftr_ansatz) {};
			\node[draw, scale=0.6, fill=white, minimum width=1.5cm, minimum height=1cm,
			      rounded corners=3pt, thick] (E3_shiftr)
				at ([xshift=0.45cm]E3_shiftr_ansatz.center) {SHIFTR};
			\draw[very thick] ([yshift=-0.15cm]E3_regime_split.center) --
				++(0,-0.3cm)
				node(E3_shiftr_regime_input_start) {};
			\draw[very thick] ([yshift=0.025cm]E3_shiftr_regime_input_start.center) -|
				(E3_shiftr.north);

			\draw[very thick] (E3_fusion_bot.center) --
				node[midway,fill=white,inner sep=1pt]{$n\!+\!2$}
				++(-1.2cm,0)
				node(E3_fusion2_top) {};
			\draw[densely dashed] (E3_fusion2_top.center) --
				++(0,0.9cm)
				node(E3_fusion2_bot) {};
			\draw[very thick] (E3_fusion2_bot.center) --
				node[midway,fill=white,inner sep=1pt]{$7$}
				++(0.6cm,0) node[port]{};
			\draw[very thick] (E3_shiftr.east) --
				(E3_fusion2_bot.center |- E3_shiftr.east);
			
			% E4 inner
			\draw ([xshift=-0.5cm]E4_out_takum_rounded) node[scale=0.5,iampshape,fill=white,t={\ \,MUX}] (E4_mux1) {};
			\draw[very thick] (E4_mux1.east) -- (E4_out_takum_rounded.center);

			\draw[very thick] (E4_in_extended_takum.center) -- ++(0.8cm,0)
				node(E4_split1_ansatz) {};
			\draw[densely dashed] ([yshift=0.4cm]E4_split1_ansatz.center)
				node(E4_split1_top) {} --
				++(0,-1.5cm)
				node(E4_split1_bot) {};
			\draw[very thick] (E4_split1_top.center) --
				node[midway,fill=white,inner sep=1pt]{$n\!-\!1$}
				++(1.0cm,0)
				node(E4_fusion_top){};

			\draw ([yshift=0.8cm]E4_split1_bot.center) -- ++(1.0cm,0)
				node(E4_fusion_bot) {};
			\draw ([yshift=0.4cm]E4_split1_bot.center) -- ++(0.8cm,0)
				node(E4_takum6) {};

			\draw[densely dashed] (E4_fusion_top.center) -- 	
				(E4_fusion_bot.center);
			\draw[very thick] ([yshift=-0.2cm]E4_fusion_top.center) --
				node[midway,fill=white,inner sep=1pt]{$n$}
				++(0.6cm,0)
				node(E4_takum_inc_split) {};
			\draw[very thick] ([yshift=0.025cm]E4_takum_inc_split.center) |-
				([yshift=-0.15cm]E4_mux1.west);
			\draw[very thick] (E4_takum_inc_split.center) |- ++(0.3cm,0.1cm)
				node(E4_takum_inc_ansatz) {};
			\node[draw, scale=0.6, fill=white, minimum width=1.5cm, minimum height=1cm,
			      rounded corners=3pt, thick] (E4inc) at ([xshift=0.45cm]E4_takum_inc_ansatz.center) {INC};
			\draw[very thick] (E4inc.east) --
				++(0.3cm,0) |-
				([yshift=0.15cm]E4_mux1.west);

			\draw[very thick] (E4_split1_bot.center) --
				node[midway,fill=white,inner sep=1pt]{$6$}
				++(0.8cm,0)
				node(E4_nor_ansatz) {};
			\node[or port,no input leads,no output leads] (E4nor)
				at ([xshift=0.2cm]E4_nor_ansatz) {};

			\node[or port,no input leads,no output leads] (E4ormain)
				at ([xshift=2.4cm]E4nor.center) {};
			\draw ([xshift=-0.1cm]E4ormain.out) -| ([yshift=0.05cm]E4_mux1.south);
			\draw (E4_in_round_down_underflows) -| ++(0.4cm,-0.3cm) --
				++(3.43cm,0) |- ([xshift=0.1cm]E4ormain.in 2);

			\node[and port,no input leads,no output leads] (E4andmain)
				at ([xshift=-0.7cm,yshift=0.5cm]E4ormain.center) {};
			\draw ([xshift=-0.1cm]E4andmain.out) --
				++(0.1cm,0) |- ([xshift=0.1cm]E4ormain.in 1);

			\draw (E4_takum6.center) |-
				([xshift=0.1cm,yshift=0.1cm]E4andmain.west);

			\draw (E4_in_round_up_overflows.center) -|++(0.6cm,-0.7cm) --
				++(0.2cm,0) node(E4_inv_ansatz) {};
			\draw node[ieeestd not port, no input leads, no output leads]
				at ([xshift=0.15cm]E4_inv_ansatz.center) (E4not) {};
			\draw ([xshift=-0.05cm]E4not.out) -- ++(1.95cm,0) |- ([xshift=0.1cm,yshift=-0.1cm]E4andmain.west);

			\node[or port,no input leads,no output leads] (E4or)
				at ([xshift=0.8cm,yshift=0.085cm]E4nor.center) {};
			\draw (E4_fusion_bot.center) -- ++(0.3cm,0) |-
				([xshift=0.1cm]E4or.in 1);
			\draw ([xshift=-0.1cm]E4nor.out) -- ++(0.1cm,0) |-
				([xshift=0.1cm]E4or.in 2);

			\draw ([xshift=-0.1cm]E4or.out) -- ++(0.1cm,0) |-
				([xshift=0.1cm]E4andmain.west);

			% E5 inner
			\draw ([xshift=-0.8cm]E5_out_takum) node[scale=0.5,iampshape,fill=white,t={\ \,MUX}] (E5_mux1) {};
			\draw[very thick] (E5_mux1.east) -- (E5_out_takum.center);

			\node[or port,no input leads,no output leads] (E5or)
				at ([yshift=-1.515cm]E5_mux1.center) {};
			\draw (E5_in_is_nar.center |- E5or.in 2) -- ([xshift=0.1cm]E5or.in 2);
			\draw ([xshift=-0.1cm]E5or.out) -|
				++(0.3cm,2.0cm) -|
				([yshift=-0.05cm]E5_mux1.north);
			\draw (E5_in_is_zero.center) -- ++(0.4cm,0) |-
				([xshift=0.1cm]E5or.in 1);

			\draw[very thick] (E5_in_takum_rounded.center) --
				++(0.4cm,0) |- ([yshift=0.15cm]E5_mux1.west);

			\draw[densely dashed]
				([xshift=2.0cm,yshift=-0.15cm]E5_in_takum_rounded.center)
				node(E5_fusion_top) {} --
				++(0,-1.0cm)
				node(E5_fusion_bot) {};
			\draw (E5_fusion_top.center) -|
				([xshift=0.6cm]E5_in_is_nar.center);
			\draw[very thick] (E5_fusion_bot.center) --
				node[midway,fill=white,inner sep=1pt]{$n\!-\!1$}
				++(-1.0cm,0)
				node[port,label={[label distance=-0.1cm]left:{$0$}}] {};

			\draw[very thick] (E5_fusion_top.center) --
				node[midway,fill=white,inner sep=1pt]{$n$}
				([yshift=-0.15cm]E5_mux1.west);
		\end{circuitikz}
	\end{center}
	\caption{
		The logic circuit of the postencoder, largely separated into five main entities:
		the underflow/overflow predictor (E1), the characteristic precursor determinator (E2),
		the extended takum generator (E3), the rounder (E4) and the output driver (E5).
		We assume $n \ge 12$ (thus omitting special 
		case handling in the underflow/overflow 
		predictor for $n < 12$) for simplicity; the
		implemented postencoder works for any $n \ge 
		2$. 
		Vertical dashed lines
		indicate where the strands of a multi-signal are split up or combined.
	}
	\label{fig:schematic-postencoder}
\end{figure}
With all components in place, we can now integrate them into a single comprehensive component: the postencoder.
It is made up of five main entities: the underflow/overflow predictor (E1), the characteristic precursor determinator (E2),
the extended takum generator (E3), the rounder (E4) and the output driver (E5). The determination of the direction bit at the outset is given separately, as it is simply the inverted sign bit of the characteristic \cite[rtl/{\allowbreak}encoder/{\allowbreak}postencoder.vhd]{code}. A full logic circuit is outlined in Figure~\ref{fig:schematic-postencoder}.
\par
For encoding logarithmic takum values, as outlined in 
\cite[rtl/{\allowbreak}encoder/{\allowbreak}encoder\_{\allowbreak}logarithmic.vhd]{code},
 the characteristic and mantissa bits derived from the barred 
logarithmic value, as defined in
\eqref{eq:representation-barred_logarithmic}, are passed into 
the postencoder. Similarly, the linear takum encoder, as 
described in 
\cite[rtl/{\allowbreak}encoder/{\allowbreak}encoder\_{\allowbreak}linear.vhd]{code},
 employs the postencoder following the conversion of the 
internal representation from equation 
(\ref{eq:representation-two_s_complement}). The characteristic 
is computed from the exponent, with conditional negation 
applied based on $\textcolor{sign}{S}$.
\section{Evaluation}\label{sec:evaluation}
Having devised decoders and encoders for both 
logarithmic and linear takums, we 
now proceed to evaluate their performance relative to 
the most effective posit 
codecs. The synthesis was conducted using Vivado 2024.1 
on a Kintex UltraScale+ 
KCU116 Evaluation Platform (FPGA part number 
XCKU5P-2FFVB676E) with a maximum
operating frequency of \SI{725}{\mega\hertz} 
\cite{fpga}. We employed the default synthesis strategy 
(Vivado Synthesis Defaults, 2024).
\par
First, both the encoder and decoder are validated using a testbench (see \cite[{simulation/{\allowbreak}decoder/{\allowbreak}predecoder\_{\allowbreak}tb.vhd, simulation/{\allowbreak}encoder/{\allowbreak}postencoder\_{\allowbreak}tb.vhd}]{code}). While the decoder is verified against a reference implementation, the encoder is assessed through a round-trip test, employing the validated decoder as the initial stage. This procedure ensures the correct operation of both components.
\par
Our takum codec is compared against the sign-magnitude FloPoCo codec (\enquote{FloPoCo-SM}, see \cite{flopoco-1-2020,flopoco-2-2021,flopoco-3-2021}), which utilizes the internal representation detailed in (\ref{eq:representation-floating_point}), and the enhanced two's complement FloPoCo codec (\enquote{FloPoCo-2C}) \cite{flopoco-4-2022}, which employs the optimised internal representation described in (\ref{eq:representation-two_s_complement}) and is integrated into the PERCIVAL RISC-V core project \cite{percival-2022,percival-64_bit-2024}. For a comparison of the internal representations, refer to Section~\ref{sec:internal_representations}.
\par
The PACoGen codec \cite{pacogen-2019} has been excluded from this comparison 
due to \cite[Figure~7]{flopoco-4-2022} indicating that its performance is 
generally inferior or at best comparable to that of the FloPoCo-2C codec. 
Similarly, the MArTo codec \cite{posit-hardware_cost-2019} was not included for 
the same reason. The posit codecs used in this evaluation have been extracted 
and incorporated into \cite[rtl/third\_party/]{code}.
\subsection{Decoder}
We begin by assessing the decoder's performance by measuring 
its maximum latency and look-up table (LUT) consumption on the 
reference FPGA. With respect to maximum latency, as illustrated 
in Figure~\ref{subfig:decoder-latency}, it is noteworthy that 
our results for both FloPoCo-SM and FloPoCo-2C are consistent 
with those reported in \cite{flopoco-4-2022}, despite the use 
of a different FPGA platform.
\par
\begin{figure}[tbp]
	\begin{center}
		\subfloat[maximum latency/ns]{
			\label{subfig:decoder-latency}
			\begin{tikzpicture}
				\begin{axis}[
					scale only axis,
					width=0.40\textwidth,
					height=0.2472\textwidth,
					xtick={8,16,32,64,128},
					xlabel={$n$},
					xlabel near ticks,
					grid=both,
					legend style={nodes={scale=0.7, transform shape}},
					legend style={at={(0.97,0.55)},anchor=east},
					grid style={line width=.1pt, draw=gray!10},
				]
					\addplot [characteristic,thick,mark=*, mark size=1.3pt] table [col sep=comma, row sep=\\] {
						8,3.189\\
						16,3.658\\
						32,3.658\\
						64,3.663\\
					};
					\addlegendentry{Takum}
					\addplot [characteristic,mark=x, mark size=1.3pt] table [col sep=comma, row sep=\\] {
						8,3.058\\
						16,3.652\\
						32,3.652\\
						64,3.663\\
					};
					\addlegendentry{Linear Takum}
					\addplot [sign,densely dashed,mark=*, mark size=1.3pt] table [col sep=comma, row sep=\\] {
						8,3.374\\
						16,3.911\\
						32,4.914\\
						64,5.864\\
					};
					\addlegendentry{FloPoCo-2C}
					\addplot [mantissa,densely dotted,mark=*, mark size=1.3pt] table [col sep=comma, row sep=\\] {
						8,3.190\\
						16,4.761\\
						32,5.535\\
					};
					\addlegendentry{FloPoCo-SM}
				\end{axis}
			\end{tikzpicture}
		}
		\hspace{0.5cm}
		\subfloat[CLB LUT count]{
			\label{subfig:decoder-luts}
			\begin{tikzpicture}
				\begin{axis}[
					scale only axis,
					width=0.40\textwidth,
					height=0.2472\textwidth,
					xtick={8,16,32,64,128},
					xtick={8,16,32,64,128},
					xlabel={$n$},
					xlabel near ticks,
					grid=both,
					ymin=-5,
					ymax=290,
					legend style={nodes={scale=0.7, transform shape}},
					legend style={at={(0.03,0.95)},anchor=north west},
					grid style={line width=.1pt, draw=gray!10},
				]
					\addplot [characteristic,thick,mark=*, mark size=1.3pt] table [col sep=comma, row sep=\\] {
						8,22\\
						16,39\\
						32,68\\
						64,125\\
					};
					\addlegendentry{Takum}
					\addplot [characteristic,mark=x, mark size=1.3pt] table [col sep=comma, row sep=\\] {
						8,21\\
						16,39\\
						32,67\\
						64,125\\
					};
					\addlegendentry{Linear Takum}
					\addplot [sign,densely dashed,mark=*, mark size=1.3pt] table [col sep=comma, row sep=\\] {
						8,15\\
						16,57\\
						32,106\\
						64,250\\
					};
					\addlegendentry{FloPoCo-2C}
					\addplot [mantissa,densely dotted,mark=*, mark size=1.3pt] table [col sep=comma, row sep=\\] {
						8,32\\
						16,62\\
						32,137\\
					};
					\addlegendentry{FloPoCo-SM}
				\end{axis}
			\end{tikzpicture}
		}
	\end{center}
	\caption{
		Evaluation results for the decoder in terms of latency and LUT consumption.
	}
	\label{fig:decoder}
\end{figure}
The results indicate that the takum decoders outperform 
both reference 
decoders, even for $n=8$ where FloPoCo-SM beats 
FloPoCo-2C. Overall, the takum 
decoders' latencies are up to $\SI{38}{\percent}$ lower 
than FloPoCo-2C's.
For decoder widths ranging from 8 to 64 bits, the 
FloPoCo-2C decoders can 
operate at frequencies of up to approximately 
\SI{170}{\mega\hertz}, whereas
the takum decoders achieve up to approximately 
\SI{270}{\mega\hertz}.
\par
A similar trend is observed in the LUT consumption, as shown in 
Figure~\ref{subfig:decoder-luts}. For $n=8$, all 
implementations exhibit comparable LUT usage and nearly linear 
delay growth. However, the takum decoder demonstrates a 
significantly lower CLB LUT consumption, with up to 
$\SI{50}{\percent}$ less usage compared to the most efficient 
posit reference.
\subsection{Encoder}
\begin{figure}[tbp]
	\begin{center}
		\subfloat[maximum latency/ns]{
			\label{subfig:encoder-latency}
			\begin{tikzpicture}
				\begin{axis}[
					scale only axis,
					width=0.40\textwidth,
					height=0.2472\textwidth,
					xtick={8,16,32,64,128},
					xtick={8,16,32,64,128},
					xlabel={$n$},
					xlabel near ticks,
					ymax=5.1,
					grid=both,
					legend style={nodes={scale=0.7, transform shape}},
					legend style={at={(0.03,0.95)},anchor=north west},
					grid style={line width=.1pt, draw=gray!10},
				]
					% 16-bit: 4.275 - 2 * 0.027, 2xCARRY8 vs. 32 bit
					\addplot [characteristic,thick,mark=*, mark size=1.3pt] table [col sep=comma, row sep=\\] {
						8,4.161\\
						16,4.221\\
						32,4.275\\
						64,4.298\\
					};
					\addlegendentry{Takum}
					% 16-bit: 4.234 - 2 * 0.027, 2xCARRY8 vs. 32 bit
					\addplot [characteristic,mark=x, mark size=1.3pt] table [col sep=comma, row sep=\\] {
						8,4.122\\
						16,4.180\\
						32,4.234\\
						64,4.254\\
					};
					\addlegendentry{Linear Takum}
					\addplot [sign,densely dashed,mark=*, mark size=1.3pt] table [col sep=comma, row sep=\\] {
						8,3.829\\
						16,4.349\\
						32,4.449\\
						64,4.908\\
					};
					\addlegendentry{FloPoCo-2C}
				\end{axis}
			\end{tikzpicture}
		}
		\hspace{0.5cm}
		\subfloat[CLB LUT count]{
			\label{subfig:encoder-luts}
			\begin{tikzpicture}
				\begin{axis}[
					scale only axis,
					width=0.40\textwidth,
					height=0.2472\textwidth,
					xtick={8,16,32,64,128},
					xtick={8,16,32,64,128},
					xlabel={$n$},
					xlabel near ticks,
					grid=both,
					legend style={nodes={scale=0.7, transform shape}},
					legend style={at={(0.03,0.95)},anchor=north west},
					grid style={line width=.1pt, draw=gray!10},
				]
					\addplot [characteristic,thick,mark=*, mark size=1.3pt] table [col sep=comma, row sep=\\] {
						8,39\\
						16,71\\
						32,140\\
						64,221\\
					};
					\addlegendentry{Takum}
					\addplot [characteristic,mark=x, mark size=1.3pt] table [col sep=comma, row sep=\\] {
						8,40\\
						16,66\\
						32,140\\
						64,237\\
					};
					\addlegendentry{Linear Takum}
					\addplot [sign,densely dashed,mark=*, mark size=1.3pt] table [col sep=comma, row sep=\\] {
						8,24\\
						16,56\\
						32,128\\
						64,269\\
					};
					\addlegendentry{FloPoCo-2C}
				\end{axis}
			\end{tikzpicture}
		}
	\end{center}
	\caption{
		Evaluation results for the encoder in terms of latency and LUT consumption.
	}
	\label{fig:encoder}
\end{figure}
In our comparison of encoders it is important to note that 
FloPoCo-SM is not included. This omission is due to the fact 
that FloPoCo-SM lacks a distinct posit encoder within its 
codebase that could be evaluated separately. Instead, the posit 
encoder is integrated with the arithmetic logic in FloPoCo-SM.
This poses no issue, as FloPoCo-2C, its superior successor, 
provides distinct decoders and encoders.
\par
Additionally, while the takum encoder incorporates
comprehensive rounding logic, the FloPoCo-2C encoder does not 
and depends on external rounding information provided by the 
caller, which must be determined beforehand and limits the 
posit encoder's usability. Within the scope of this comparison, 
the proposed takum encoder is consequently at a significant 
disadvantage, as it has to derive this rounding information on 
its own.
\par
Moreover, for $n=16$, the synthesiser defaults to using LUTs instead of a CARRY8 chain for computing $\mathit{takum\_rounded\_up}$, increasing the delay. To address this, synthesis settings were adjusted to enforce the use of a CARRY8 chain. The same approach was taken
with the linear and logarithmic takum encoders to enforce consistent synthesis for $n=8$.
\par
An examination of the maximum latency presented in 
Figure~\ref{subfig:encoder-latency} 
reveals that, although the takum encoders exhibit 
higher latency for $n=8$, 
their latencies increase only marginally as $n$ grows, 
ultimately being up to 
$\SI{13}{\percent}$ lower than those of the FloPoCo-2C 
encoder. In contrast, 
the latency of the (non-rounding) FloPoCo-2C encoder 
increases more 
significantly with larger values of $n$. Both the 
logarithmic and linear 
variants of the takum encoder display comparable 
performance characteristics 
across the range of $n$.
For encoder widths ranging from 8 to 64 bits, the 
FloPoCo-2C encoders can 
operate at frequencies of up to approximately 
\SI{200}{\mega\hertz}, whereas the
takum encoders achieve up to approximately 
\SI{230}{\mega\hertz}.
\par
In terms of slice LUT consumption, as illustrated in 
Figure~\ref{subfig:encoder-luts}, the takum and posit encoders 
show comparable usage. The non-rounding FloPoCo-2C encoder 
exhibits a slight advantage for $n \le 32$, whereas the takum 
encoders have an advantage for $n=64$.
\section{Conclusion and Outlook}\label{sec:conclusion_and_outlook}
We have demonstrated the design and implementation of a takum hardware codec using VHDL,
proposing a new internal LNS representation for improved 
performance.
Our evaluations reveal that the takum codec significantly 
outperforms the leading posit codec, FloPoCo-2C, 
while not relying on outside rounding information at the 
encoding stage, and demonstrates near-optimal scalability with 
respect to $n$.
This superior performance can largely be attributed to the fact 
that 
the coded exponent in takums is bounded, whereas for posits it 
is unbounded, and the use of elegant transforms made possible 
by it. In contrast, posits require logic that encompasses the 
entire width of the binary string, spanning all bits during 
both decoding and encoding operations (for instance, shifts), 
rather than just the initial 12 bits.
Therefore, takums can be said to be generally more 
hardware-efficient compared to posits on FPGA platforms, at 
least in regard to the codec design, since the full design 
of an arithmetic processing unit (APU) introduces additional 
variables. Nevertheless, given that posits and takums share the 
same internal representation, the codec largely remains the 
sole distinguishing component.
\par
Future work will evaluate the performance of the takum 
codec on VLSI systems, 
given that the complexity of FPGA implementations does 
not always correlate 
with VLSI complexity. Another aspect concerns the evaluation of 
the codec implemented within a full APU, potentially as a 
pipelined design, in direct comparison with the current state 
of the art in IEEE 754 APUs. An additional task would be 
discussing the implementation of the \enquote{quire} 
accumulator, as used in posit arithmetic.
Furthermore, it is warranted to investigate the effects 
of the novel choice of 
base $\sqrt{e}$ in (logarithmic) takums on the hardware 
implementation of an 
arithmetic core. This aspect is tentatively addressed 
in  
\cite[Section~4.4]{2024-takum}. However, a 
comprehensive hardware 
implementation is necessary for a thorough assessment.
\begingroup
\sloppy
\printbibliography
\endgroup
\end{document}